\newtheorem{lemma}{Lemma}
\newtheorem{theorem}[lemma]{Theorem}
\newtheorem{defn}[lemma]{Definition}
\newtheorem{corollary}[lemma]{Corollary}
\newtheorem{claim}[lemma]{Lemma}
\newtheorem{remark}[lemma]{Remark}
\newtheorem{example}{Example}[section]
\newcommand{\score}{{\mathrm{score}}}
\newcommand{\pos}{{\mathrm{pos}}}
        {\medskip}
\newenvironment{proof}{\noindent{\bf Proof:}}%
        {\hspace*{\fill}$\Box$\par}
        {\hspace*{\fill}$\Box$\par}
        {\hspace*{\fill}$\Box$\par}
\newcommand{\topic}[1]{\noindent{{\bf #1}:}}
\newcommand{\eps}{\varepsilon}
\newcommand{\Prob}{{\operatorname{Pr}}}
\newcommand{\Exp}{{\mathbb{E}}}
\newcommand{\ceil}[1]{\big\lceil #1 \big\rceil}
\newcommand{\floor}[1]{\big\lfloor #1 \big\rfloor}
\newcommand{\eat}[1]{}
\newcommand{\e}{\mathbf{e}}
\newcommand{\R}{\mathbb{R}}
\newcommand{\Z}{\mathbb{Z}}
\newcommand{\calB}{\mathcal{B}}
\newcommand{\calI}{\mathcal{I}}
\newcommand{\calN}{\mathcal{N}}
\title{\bf Multiwinner Voting with Fairness Constraints}
\author{L. Elisa Celis, Lingxiao Huang, and Nisheeth K. Vishnoi \\
EPFL, Switzerland \\
}
\begin{document}

\maketitle

\begin{abstract}
\label{sec:abstract}
Multiwinner voting rules are used to select a small representative subset of candidates or items from a larger set given the preferences of voters.
However, if candidates have sensitive attributes such as gender or ethnicity  (when selecting a committee), or specified types such as political leaning (when selecting a subset of news items), an algorithm that chooses a subset by optimizing a multiwinner voting rule may be unbalanced in its selection -- it may under or over represent a particular gender or political orientation in the examples above.
We introduce an algorithmic framework for multiwinner voting problems when there is an additional requirement that the selected subset should be ``fair'' with respect to a given set of attributes.
Our framework provides the flexibility to (1) specify fairness with respect to {\em multiple, non-disjoint} attributes (e.g., ethnicity \emph{and} gender) and (2) specify a score function.
We study the computational complexity of this constrained multiwinner voting problem for monotone and submodular score functions and present several approximation algorithms and matching  hardness of approximation results  for various attribute group structure and types of score functions.
We also present simulations that suggest that adding fairness constraints may not affect  the scores significantly when compared to the unconstrained case.
\end{abstract}

\newpage
\tableofcontents

\newpage

\section{Introduction}
\label{sec:intro}

The problem of selecting a committee from a set of candidates given the preferences of voters, called a multiwinner voting problem, arises in various social, political and e-commerce settings -- from electing a parliament to govern a country, to selecting a committee to decide prizes and awards, to the selection of products to display on Amazon's front page or which movies and TV shows to present on Netflix.
Formally, there is a set $C$ of $m$ ``candidates'' (i.e., people, products, articles, or other items) that can be selected and a set $A$ of $n$ ``voters'' (who either vote explicitly or are simply users who implicitly state preferences via their behavior) that has a (possibly incomplete) preference list over the $m$ candidates.
The goal is to select a subset of $C$ of size $k$ based on these preferences.

 Given the preference lists, it remains to specify how the selection will be made.
One common approach is to define
 a ``total'' score function; namely, a function $\score: 2^C\rightarrow \R_{\geq 0}$ that gives a score to each potential committee which depends on the voters' preferences.
This reduces the selection to an optimization problem:  pick a  committee of size $k$  that maximizes the score.
Different views on the desired properties of the selection process have led to
a number of different scoring rules and, consequently, to a variety of algorithmic problems that have been a topic of much recent  interest; see~\cite{faliszewski2017multiwinner}.
Prevalent examples include general multiwinner voting rules such as the committee scoring rules~\cite{elkind2014properties,aziz2017condorcet}, the approval-based rules~\cite{aziz2017justified}, the OWA-based rules~\cite{skowron2016finding}, variants of the Monroe rule~\cite{betzler2013computation,monroe1995fully,skowron2015achieving}
and the goalbase rules~\cite{uckelman2009representing}.

However, it has been shown that voting rules, in the most general sense, can negatively affect the proportion of women in the US legislature~\cite{votingrules},
and result in a disproportionate electorate that under-represents a minority ~\cite{DBLP:journals/corr/FaliszewskiLSSS16}.
Furthermore, such algorithmic biases have been shown to influence human perceptions and opinions of such minorities \cite{kay2015unequal}.
An increasing awareness
has led governments to generic~\cite{press2016preparing} and specific \cite{zealand1985royal} recommendations 
in order to ensure sufficient representation of minority populations.

In response, ``proportional representation'' rules
\cite{monroe1995fully,brill2017multiwinner},
that ensure that the political affiliations of the electorate are reflected proportionately in the elected body, are being deployed.
Formally, let $P_1,\ldots,P_p \subseteq [m]$ be $p$  {\em disjoint} groups of  candidates where $i \in \left\{1, \ldots, p\right\}$ represents a given group.
For any $i\in [p]$, let $f_i$ be the fraction of voters who belong to (or, more generally, prefer) group $i$.
Then, a voting rule achieving full proportionality would ensure that the selected committee $S$ satisfies $\floor{k \cdot f_i} \leq |S \cap P_i| \leq \ceil{k \cdot f_i}$; see also~\cite[Definition 5]{brill2017multiwinner}.
{Other proportional representative schemes include the proportional approval voting (PAV) rule~\cite{brill2017multiwinner} and the Chamberlain Courant-rule~\cite{chamberlin1983representative}.}
Other approaches, such as Brill et al.~\cite{brill2017multiwinner}, consider different notions of ``fair'' representation; e.g., Koriyama et al.~\cite{koriyama2013optimal} argue that minorities should have disproportionately many representatives.

The main conceptual contribution of this paper is the first algorithmic framework for multiwinner voting problems that can (1) incorporate very general notions of fairness with respect to arbitrary group structures (which include multiple simultaneous attributes) and (2) outputs a subset that maximizes the given score function subject to being fair.
Our model and the algorithmic problem arising therein is described
in Section \ref{sec:model}.
The main technical contributions
of the paper are a host of new approximate algorithms and hardness of approximation results depending on the score function and the precise structure of the fairness constraints.
The hardness results are  described
in Section \ref{sec:hard} and the algorithms in Section \ref{sec:alg}.
Empirically, we show that existing multiwinner voting rules may introduce bias and that our approach not only ensures fairness, but does so with a score that is close to the (unconstrained) optimal (see Section \ref{sec:experiment}).
Overall, our work gives a promising
general and rigorous algorithmic solution to the problem of
controlling and alleviating bias arising from various current multiwinner voting settings.

\section{Our Contributions}\label{sec:model}

\paragraph{Model.}
In a multiwinner voting setting, we are given $m$ candidates, $n$ voters that each has a
(potentially incomplete and/or non-strict) preference list
over the $m$ candidates,
a {score function} $\score:2^{[m]} \to \mathbb{R}_{\geq 0}$ defined by these lists,
and a desired number $k\in [m]$ of winners.
In addition, to consider fair solutions, we are given arbitrary (potentially non-disjoint) groups of candidates $P_1,\ldots,P_p \subseteq C$,
and fairness constraints on the selected winner set $S$ of the form: $\ell_i\leq |S\cap P_i|\leq u_i, ~ \forall i\in [p]$
for given numbers $\ell_1,\ldots,\ell_p,u_1,\ldots,u_p\in \Z_{\geq 0}$.
The goal of the constrained multiwinner voting problem is to select a committee of size $k$ that maximizes $\score(S)$ and satisfies all fairness constraints. If the score function is monotone and submodular we call the problem the {\em constrained MS multiwinner voting problem}. 
This includes the well-studied Chamberlin-Courant (CC) rule, the Monroe rule, the OWA-based rules and the goalbase rules.

\begin{table*}[t]
  \centering
{\tiny
  \begin{tabular}{|*{7}{c|}}
\hline
 &  $\Delta\geq 3$ & $\Delta=1$ & only $u_i$ & only $\ell_i$ & $p=O(1)$ & Unconstrained \\ \cline{1-7}
  \cellcolor[gray]{0.8} A & \cellcolor[gray]{0.8} $(1-\nicefrac{1}{e}-o(1))$-bi (Thm. \ref{thm:general})   & \cellcolor[gray]{0.8} $1-\nicefrac{1}{e} $ (Thm. \ref{thm:delta=1})  & \cellcolor[gray]{0.8} $\frac{1}{\Delta+1}$-bi (Thm. \ref{thm:onlyupper})  & \cellcolor[gray]{0.8} $1-\nicefrac{1}{e}-o(1)$  (Thm. \ref{thm:onlylower}) & \cellcolor[gray]{0.8} $1-\nicefrac{1}{e}$  (Thm. \ref{thm:constant})  & \cellcolor[gray]{0.8} $1-\nicefrac{1}{e}$  (NWF78a)\\  \cline{1-7}
   C & Feasibility NP-hard (Thm. \ref{thm:Delta3hard}-\ref{thm:constantviolation}) & $1-\nicefrac{1}{e}+\varepsilon$  &  $O(\nicefrac{\log \Delta}{\Delta}) $ (Thm. \ref{thm:approxhard}) & Feasibility NP-hard  (Thm. \ref{thm:lowerhard})  & $1-\nicefrac{1}{e}+\varepsilon$   & $1-\nicefrac{1}{e}+\varepsilon$ (NW78) \\  \hline
  \end{tabular}
  \caption{\footnotesize A summary of the results for the constrained monotone submodular (MS) multiwinner voting problem; each column denotes a different kind of group structure of the attributes.
  In Row ``A'' (for algorithm),  the entry ``$\theta$'' means that there exists a $\theta$-approximation algorithm;
 ``$\theta$-bi'' means that the algorithm produces a $\theta$-approximate solution compared to the optimal solution but the fairness constraints may be violated by a small multiplicative factor.
   In Row ``C'' (for complexity), the entry ``$\theta$'' means that under the assumption $P\neq NP$, there does not exist a polynomial time algorithm that will always find a solution whose value is at most a $\theta$ factor from the optimal solution;
  ``Feasibility NP-hard'' means that it is NP-hard to check whether there is a feasible solution satisfying all fairness constraints.
  $\varepsilon>0$ is an arbitrary constant.
For each entry, we point the reader to either the reference for the result, or our Theorem $i$ (Thm. $i$) which contains it.
}
  \label{tab:msresult}}
\end{table*} \noindent

\begin{table*}[t]
  \centering
\tiny{
  \begin{tabular}{|*{7}{c|}}
\hline
 \multicolumn{2}{|c|}{Voting Rules} &
 $\Delta\geq 3$  & $\Delta=2$ & $\Delta=1$ & $p=O(1)$ &  Unconstrained \\
 \hline \cline{1-7}
 \multirow{2}{*}{SNTV} &
 \cellcolor[gray]{0.8} A  &  \cellcolor[gray]{0.8} $(1-o(1))$-bi  (Thm. \ref{thm:general})   & \cellcolor[gray]{0.8} P (Thm. \ref{thm:SNTV2}) & \cellcolor[gray]{0.8} P  (Thm. \ref{thm:SNTVdelta=1})  & \cellcolor[gray]{0.8} P  (Thm. \ref{thm:constant}) & \cellcolor[gray]{0.8} P \\  \cline{2-7}
  & C  &Feasibility NP-hard  (Thm. \ref{thm:Delta3hard}-\ref{thm:constantviolation})  & P & P & P & P \\ \cline{1-7}
 \multirow{2}{*}{$\alpha$-CC} &
 \cellcolor[gray]{0.8}  A   &  \cellcolor[gray]{0.8} $(1-\nicefrac{1}{e}-o(1))$-bi  (Thm. \ref{thm:general})   &  \cellcolor[gray]{0.8} $(1-\nicefrac{1}{e}-o(1))$-bi  (Thm. \ref{thm:general})   &  \cellcolor[gray]{0.8} $1-\nicefrac{1}{e} $  (Thm. \ref{thm:delta=1}) &  \cellcolor[gray]{0.8}$1-\nicefrac{1}{e}$  (Thm. \ref{thm:constant})  &  \cellcolor[gray]{0.8} $1-\nicefrac{1}{e}$  (LB11) \\  \cline{2-7}
  & C  & Feasibility NP-hard  (Thm. \ref{thm:Delta3hard}-\ref{thm:constantviolation}) &$1-\nicefrac{1}{e}+\varepsilon$  (SFS15) &$1-\nicefrac{1}{e}+\varepsilon$  (SFS15) & $1-\nicefrac{1}{e}+\varepsilon$  (SFS15) & $1-\nicefrac{1}{e}+\varepsilon$  (SFS15) \\  \cline{1-7}
 \multirow{2}{*}{$\beta$-CC} &
 \cellcolor[gray]{0.8} A   &  \cellcolor[gray]{0.8} $(1-\nicefrac{1}{e}-o(1))$-bi  (Thm. \ref{thm:general})   &  \cellcolor[gray]{0.8} $(1-\nicefrac{1}{e}-o(1))$-bi  (Thm. \ref{thm:general})   & \cellcolor[gray]{0.8} $1-\nicefrac{1}{e} $  (Thm. \ref{thm:delta=1}) & \cellcolor[gray]{0.8} $1-\nicefrac{1}{e}$  (Thm. \ref{thm:constant})  & \cellcolor[gray]{0.8} PTAS (SFS15) \\  \cline{2-7}
  & C  & Feasibility NP-hard  (Thm. \ref{thm:Delta3hard}-\ref{thm:constantviolation})  & $1-\nicefrac{1}{e}+\varepsilon$  (Thm. \ref{thm:simplelower})  &$1-\nicefrac{1}{e}+\varepsilon$  (Thm. \ref{thm:simplelower})  &$1-\nicefrac{1}{e}+\varepsilon$  (Thm. \ref{thm:simplelower}) & NP-hard  (PRZ08) \\\hline
  \end{tabular}
  \caption{\footnotesize
  A summary of our results for the constrained monotone submodular (MS) multiwinner voting problem using three variants of the Chamberlin-Courant rule; each column denotes a different kind of group structure of the attributes.
  The definitions of (Thm. $i$), ``A'', ``C'', ``$\theta$-bi'', ``Feasibility NP-hard'' and $\varepsilon$ are the same as in Table~\ref{tab:msresult}. ``P'' means there exists a polynomial time exact algorithm.
     }
  \label{tab:experiment}}
 \end{table*}

\paragraph{Results.} As our model generalizes prior work on the unconstrained multiwinner voting case, the algorithmic problems that arise largely remain NP-hard and we focus on developing approximation algorithms for them.
An important practical parameter, that also plays a role in the complexity of the constrained multiwinner voting problem, is the maximum number of groups in which any candidate can be; we denote it by $\Delta.$
In real-world situations, we expect $\Delta$ to be a small constant, i.e.,  each committee member is in only a few groups.
\eat{
\textcolor{red}{For instance, consider a newspaper that wants to display some summarizations of news on the first page.
However, the space is limited and can only include $k$ summarizations.
How should the newspaper select summarizations to display?
In practice, the newspaper may require that each major topic, e.g., politics, business, sports, and entertainment, should have 1-3 summarizations.
This requirement corresponds to the case of $\Delta=1$.
Sometimes, the newspaper also requires that different views of a controversial topic should have at least 1 summarization.
Including this requirement makes the parameter $\Delta=2$.
Moreover, the newspaper may require 1-2 summarizations of the in-depth report on the first page or have other requirements on contents, which makes $\Delta=3$.
}
}
Our main results (classified by $\Delta$) are summarized  in Tables~\ref{tab:msresult} and~\ref{tab:experiment}.

When $\Delta=1$, that is when each committee member can be a part of at most one group, we present a $(1-1/e)$-approximation algorithm (Theorem~\ref{thm:delta=1}) which matches the $(1-1/e-\varepsilon)$-hardness of approximation \cite{nemhauser1978analysis}.

When $\Delta \geq 3,$  unlike the unconstrained case, even checking whether there is a feasible solution becomes NP-hard (Theorem~\ref{thm:Delta3hard}).
Further, the problem of finding a solution that violates cardinality or fairness constraints up to any multiplicative constant factor remains NP-hard (Theorems~\ref{thm:lowerhard} and~\ref{thm:constantviolation}).
Moreover, even if feasibility is guaranteed, the problem remains hard to approximate within a factor of $\Omega(\log \Delta/\Delta)$ (Theorem~\ref{thm:approxhard}).
To bypass the issue of  ensuring feasibility, we assume the problem instance always has a feasible solution and present certain sufficient conditions on the input instances that {\em guarantee} feasibility.
For instance, if we assume that the fraction of each group in the selected committee is allowed some slack when compared to their proportion in the set of candidates  ($\ell_i \leq k (|P_i|/m-0.05)$ and $u_i \geq k (|P_i|/m+0.05)$).
then a random committee of size $k$ is likely to be feasible.
   We discuss this and other natural assumptions that can ensure feasibility in Section~\ref{sec:feasibilityconditions}.
Subsequently, we give a near-optimal bi-criterion approximation algorithm violating each fairness constraint by a small multiplicative factor for the general class of MS voting rules (Theorem~\ref{thm:general}).

We also study special cases: the fairness constraints involve only lower bounds (Theorem~\ref{thm:onlylower}), only upper bounds (Theorem~\ref{thm:onlyupper}), or there is a constant number of fairness constraints (Theorem~\ref{thm:constant}).

Finally, we also study some special voting rules with MS score functions such as SNTV, $\alpha$-CC and $\beta$-CC (see Section~\ref{sec:pre}
for definitions) where the unconstrained problem has recently received considerable attention.
See Table~\ref{tab:experiment} for a summary of our results for these rules.
Unlike the unconstrained case where a PTAS is known, we show that constrained $\beta$-CC multiwinner voting is $(1-1/e+\varepsilon)$-inapproximable ($\varepsilon>0$ is constant) even if $\Delta=1$ and $p=2$ (Theorem~\ref{thm:simplelower}).
The case of $\Delta=2$ is intriguing and not entirely settled.
We prove that the constrained SNTV multiwinner voting problem has a polynomial-time exact algorithm if $\Delta=2$ (Theorem~\ref{thm:SNTV2}).

\paragraph{Techniques.}
The algorithmic results  combine  two existing tools that have been extensively used in the monotone submodular maximization literature.
The first, ``multilinear extension'' (Definition~\ref{def:multilinear}), extends the discrete MS score function to a continuous function over a relaxed domain.
By applying a continuous greedy process via multilinear extension, a fractional solution with a high score can be computed efficiently.
The second is to round the fractional solution to a  committee of size $k$  by ``dependent rounding".
In the  case of $\Delta\geq 2$ (Theorem~\ref{thm:general}), we use a swap randomized rounding procedure introduced by~\cite{chekuri2010dependent}.
In the case of $\Delta=1$ (Theorem~\ref{thm:delta=1}) we design a two-layered dependent rounding procedure that runs in  linear time.
Some of the algorithmic results are achieved by reduction to well-studied problems, like the monotone submodular maximization problem with $\Delta$-extendible system (Theorem~\ref{thm:onlyupper}) and constrained set multi-cover  (Theorem~\ref{thm:onlylower}).
The hardness results follow from reductions from well-known NP-hard problems, including $\Delta$-hypergraph matching, 3-regular vertex cover, constrained set multi-cover and independent set, and borrow techniques from a recent work on fairness for ranking problems \cite{celis2018ranking}.

\paragraph{Generality.}

This approach is general in that 1) it can handle arbitrary MS score functions, 2) multiple sensitive attributes which can take on arbitrary group structures, 3) interval constraints that need not specify exact probabilities of representations for each group, and in doing so 4) can satisfy many different existing notions of fairness.

Note that fairness can be simultaneously ensured across multiple sensitive attributes
(e.g., ethnicity \emph{and} political party) -- the number of attributes a single candidate can have is captured by $\Delta$. 
For example, in the New Zealand parliamentary election~\cite{zealand1985royal}  the parliament is required to include sufficient representation across 3 types of attributes: political parties, special interest groups, and Maori representation. 
Each candidate has an identity under each attribute, i.e., $\Delta=3$.

This type of constraints  generalize notions of proportionality that have arisen in the voting literature such as fully proportional representation \cite{monroe1995fully},
by letting $\ell_i=\big\lfloor \nicefrac{k\cdot n_i}{n}\big\rfloor$ and $u_i= \big\lceil \nicefrac{k\cdot n_i}{n}\big\rceil$ for all $i$.
(Here, $n_i$ is the number of voters who prefer type $i$.)
Similarly, one can ensure other notions such as degressive proportionality \cite{koriyama2013optimal} (e.g., satisfying Penrose's square root law~\cite{penrose1946elementary}) and flexible proportionality \cite{brill2017multiwinner}.
In particular, the percent of representation need not be exactly specified, rather one can input an allowable range.

This is also general enough to ensure the outcome satisfies existing notions of fairness, such as disparate impact, statistical parity, and risk difference.
For example, consider the case of groups that form a partition, and let  $m_i$ denote the number of voters that have type $i$.
Given some
$\xi_i\in [0,1]$, for each  $P_i$, we say committee of size $k$ satisfies $\xi$-statistical parity if $\left| \nicefrac{|S\cap P_i|}{k}-\nicefrac{m_i}{n}\right| \leq \xi_i$ (see \cite{dwork2012fairness} for the original definition).
We can set fairness constraints that guarantee $\xi$-statistical parity by setting $\ell_i$ and $u_i$ such that
$
\xi_i\geq 1-\max \left\{\left | \nicefrac{\ell_i}{k}-\nicefrac{m_i}{n}\right | ,\left | \nicefrac{u_i}{k}-\nicefrac{m_i}{n} \right | \right\}
$
for all $i$.
The groups $P_i$ and corresponding fairness parameters $\ell_i,u_i$ are taken as input and can be set according to the underlying context and desired outcome to encode a given metric;
see Section~\ref{sec:dis} for some examples.

Finally, we note that the fair multiwinner voting rule that results after adding fairness constraints continues to satisfy many nice properties (e.g., consistency, monotonicity, and fair variants of weak unanimity or committee monotonicity; see ~\cite{elkind2014properties} for formal definitions)
of the (unconstrained) voting rule; see Section~\ref{sec:dis} for details.

\section{Preliminaries}
\label{sec:pre}

Now we present the formal definition of our model and
the definitions of three monotone submodular voting rules:
SNTV, $\alpha$-CC and $\beta$-CC, which we consider as special cases.

\begin{defn}
	(Our model: constrained multiwinner voting)
	\label{def:model}
	We are given a set $C$ of $m$ candidates, $n$ voters together with
	$R = \left\{\succ_i\right\}_{i\in [n]}$ where $\succ_i$ is the preference list (potentially incomplete
	and/or non-strict) over the $m$ candidates for voter
	$i$, and a score function $\score_R : 2^{[m]} \rightarrow R_{\geq 0}$ with an evaluation
	oracle, a desired number $k\in [m]$ of winners, arbitrary
	groups $P_1,\ldots, P_p\subseteq C$ and integers $\ell_1,\ldots \ell_p,u_1,\ldots u_p\in \Z_{\geq 0}$. 
	Given a committee $S$ of size $k$, define the fairness constraints
	by $\ell_i\leq |S\cap P_i|\leq u_i, ~ \forall i\in [p]$.

	Let $\calB \subseteq 2^{[m]}$ denote the family of all committees
	of size $k$ that satisfy all fairness constraints. 
	The goal of the constrained
	multiwinner voting problem is to select an $S\in \calB$
	that maximizes $\score_R(S)$. 
	If the score function is monotone
	submodular, 
	\footnote{Recall that a function $f : 2^{[m]} \rightarrow R_{\geq 0}$ is a monotone submodular
		(MS) function if $f(A \cup B) + f(A \cap B) \leq f(A) + f(B)$ for all
		$A,B \subseteq [m]$ and $f(A) \leq f(B)$ for all $A \subseteq B$.
	}
	we call the problem the constrained MS multiwinner
	voting problem.
\end{defn}

\noindent
If $R$ is clear from the context, we denote $\score_R$ by $\score$.
This succinct description of $\calB$ and $\score$ allows us to design
fast algorithms despite the fact that $\calB$ can be exponentially.

For a preference order $\succ$ and a candidate $c \in C$, we write
$\pos_{\succ}(c)$ to denote the position of $c$ in $\succ$ (candidate ranked
first has position 1 and ranked last has position $m$). 
Given a size-$k$ committee $S\subseteq C$, we denote $\pos_{\succ}(S) := (i_1,\ldots, i_k)$ to be the sequence of positions
of the candidates in $S$ sorted in increasing order with
respect to $\succ$. Define $[m]_k$ to be the set of all size-$k$ increasing
sequences of elements from $[m]$.

\begin{defn}
	\label{def:CC}
	(CC)
	In the Chamberlin-Courant rule, there exists a positional score function $\gamma_m: [m]\rightarrow \R$ satisfying that $\gamma_m(i)\geq \gamma_m(j)$ if $1\leq i<j\leq m$. Define $\gamma_{m,k}(i_1,\ldots,i_k)=\max_{j\in [k]}\gamma_m(i_j)=\gamma_m(i_1)$ for any $(i_1,\ldots,i_k)\in [m]_k$. The total score function is defined by $\score(S):=\sum_{i\in [n]} \gamma_{m,k}\left(pos_{\succ_i}(S)\right)$.
\end{defn}

\noindent
Now we define three CC rules: SNTV, $\alpha$-CC and $\beta$-CC.

\begin{defn}
	\label{def:SNTV}
	The SNTV rule uses the following positional
	score function: 
	$\gamma_m(1) = 1$ and 
	$\gamma_m(i) = 0$ for $i > 1$. Observe that SNTV only requires that each $\succ_i$ includes the most preferred candidate of voter $i$.
\end{defn}

\begin{defn}
	\label{def:alphaCC}
	The $\alpha$-CC rule uses the following positional
	score function: 
	$\gamma_m(i) = 1$ for $i \leq k$ and 
	$\gamma_m(i) = 0$ for
	$i > k$. Observe that $\alpha$-CC only requires that each $\succ_i$ includes the
	top $k$ preferred candidates (without ordering) of voter $i$.
\end{defn}

\begin{defn}
	\label{def:betaCC}
	The $\beta$-CC rule uses the following positional score function:
	$\gamma_{m}(i)=m-i$.
\end{defn}

\noindent
If using the SNTV/$\alpha$-CC/$\beta$-CC rule, we call our framework the constrained SNTV/$\alpha$-CC/$\beta$-CC multiwinner voting problem respectively.

\begin{remark}
All three special CC rules have practical applications.
Suppose an airline wants to select some movies to display. One of the best strategies is to present a set of options which are as diverse as possible, i.e., each passenger should see
some movie appealing to him or her.
If each passenger is only satisfies by his or her favorite movie, then the SNTV rule is a natural choice. On the other hand, if each passenger has a set of good movies and is satisfied by any one of them, then it is natural to use the $\alpha$-CC rule. Finally, if each passenger has a ranking of the movies and the individual score of movies decreases linearly according to the ranking, then $\beta$-CC is our rule of choice.
\end{remark}

\section{Discussion}
\label{sec:dis}

\subsection{Generality of Fairness Constraints}
\label{sec:generality}
Our framework encompasses many existing notions of fairness that have arisen in the  machine learning literature. We summarize them in the following.

\paragraph{Fairness in the ML Literature.}

\begin{enumerate}
\item \textbf{Statistical parity \cite{dwork2012fairness}:}
Consider the case where the voters can also be partitioned into the same $p$ groups (e.g., if the groups encode ethnicity). Let  $m_i$ denote the number of voters that have type $i$.
A committee  $S$ of size $k$ has statistical parity if
 $
  \left | \frac{|S\cap P_i|}{k}-\frac{m_i}{n}\right | \approx 0
$
for all $i$.
Given some
$\xi_i\in [0,1]$ for each  $P_i$, we say committee satisfies $\xi$-statistical parity if $\left| \frac{|S\cap P_i|}{k}-\frac{m_i}{n}\right| \leq \xi_i.$
We can set fairness constraints that guarantee $\xi$-statistical parity by setting $\ell_i$ and $u_i$ such that
$
\xi_i\geq \max \left\{\left | \frac{\ell_i}{k}-\frac{m_i}{n}\right | ,\left | \frac{u_i}{k}-\frac{m_i}{n} \right | \right\}
$
for all $i$.

\item \textbf{Diversity:}
Diversity rules \cite{DBLP:conf/fie/CohoonCRL13,reyland2017how}
typically look at the population of applicants (i.e., candidates) and assert that the number of  minorities in the committee should be proportional to the number of minorities in the applicant pool, e.g.,
the 80\% rule \cite{DBLP:conf/fie/CohoonCRL13,reyland2017how}.
Thus, the goal would be to ensure that, for a committee $S$ of size $k$ satisfies
$
\left | \frac{|S\cap P_i|}{k}-\frac{|P_i|}{m}\right | \approx 0.
$
We say a committee satisfies $\xi$-diversity if $\left | \frac{|S\cap P_i|}{k}-\frac{|P_i|}{m}\right | \leq \xi_i$.
We can ensure this by setting $\ell_i$ and $u_i$ so that
$\xi_i \geq \max \left\{\left | \frac{\ell_i}{k}-\frac{|P_i|}{m}\right | ,\left | \frac{u_i}{k}-\frac{|P_i|}{m}\right | \right\}
$
for all $i$.
\end{enumerate}

\paragraph{Fairness in the Voting Literature.}
Additionally, the constraints also generalize notions of proportionality that have arisen in the voting literature such as fully proportional representation \cite{monroe1995fully}, fixed degressive proportionality \cite{koriyama2013optimal} and flexible proportionality \cite{brill2017multiwinner}; see the following. For any $i\in [p]$, let $n_i$ is the number of voters who prefer candidates who belong to type $i$.

\begin{enumerate}[resume]
\item \textbf{Unconstrained multiwinner voting:} Let $\ell_i=0$ and $u_i\geq |P_i|$ for all $i$.
\item \textbf{Fully proportional representation \cite{monroe1995fully}:}  Let $\ell_i=\big\lfloor \frac{k\cdot n_i}{n}\big\rfloor$ and $u_i= \big\lceil \frac{k\cdot n_i}{n}\big\rceil$ for all $i$.
\item \textbf{Fixed degressive proportionality \cite{koriyama2013optimal}:} For example, to ensure the  committee  satisfies the Penrose square root law, let  \cite{penrose1946elementary} $\ell_i=\big\lfloor \frac{k\sqrt{n_i}}{\sum_{i}\sqrt{n_i}}\big\rfloor$ and $u_i\geq |P_i|$ for all $i$. More generally, any fixed degressive proportionality can be attained by setting fairness parameters appropriately.
\item \textbf{Flexible proportionality \cite{brill2017multiwinner}:}
\begin{enumerate}
\item In \cite[Definition 5]{brill2017multiwinner}, the authors consider a type of flexible proportionality called \emph{lower quota}, where each group $P_i$ gets at least $\big\lfloor \frac{k\cdot n_i}{n}\big\rfloor$ seats. Then we can set $\ell_i=\big\lfloor \frac{k\cdot n_i}{n}\big\rfloor$ and $u_i= |P_i|$ for all $i$.
\item Another example of flexible proportionality is as follows. If we would be satisfied with a committee that has proportions which lie anywhere between the fully proportional and Penrose square root law, then we can set $\ell_i=\min\left\{\big\lfloor \frac{k\cdot n_i}{n}\big\rfloor,\big\lfloor \frac{k\sqrt{n_i}}{\sum_{i}\sqrt{n_i}}\big\rfloor\right\}$ and $u_i =  |P_i|$ for all $i$. Such flexibility is valuable because it can allow for a higher score; see Example~\ref{ex:flex} in Section~\ref{sec:example}.
\end{enumerate}
\end{enumerate}

\paragraph{Fairness across non-disjoint groups.}
Importantly, our fairness constraints also allow us to ensure fairness across multiple sensitive attributes -- the above example from the literature only consider groups $\{P_i\}$ that \emph{partition} the set of candidates (e.g., ethnicity).
Our framework significantly generalizes these notions by allowing arbitrary groups over which we may wish to impose constraints.

\begin{enumerate}[resume]
\item \textbf{Non-disjoint group constraints:} One can ensure both fully proportional representation by political party \emph{and} by demographic group by placing constraints (a) above on groups that correspond to the political party and constraints (b) on groups that correspond to demographics.
    Sometimes, groups can be arbitrary subsets instead of multiple partitions. For instance, let groups represent the major of applicants, and there are applicants with a double major or even more. Our framework can also handle this; see Example~\ref{ex:affectscore} in Section~\ref{sec:example}.
\end{enumerate}

\subsection{Feasibility Conditions and Properties}
\label{sec:feasibilityconditions}

Our algorithmic results can bypass the barriers posed by our hardness results in Section~\ref{sec:hard} by assuming the instances are feasible.
Here we argue that under many natural conditions in the multiwinner voting setting,  we can {\em deduce} that  feasible solutions exist and can construct them. We present some examples below.

\begin{enumerate}
\item \textbf{The proportional representative condition ensures feasibility:}
Assume that
    \begin{equation}
    \label{eq:klarge}
    k\geq 100\ln p.
    \end{equation}
This is natural as we expect $k$ to be much larger than $\ln p$; each candidate only has a small set of types over which we would like to impose fairness.
 Furthermore, assume that there is some slack between the true and allowable fraction of each type in the selected committee, i.e.,
    \begin{equation}
    \label{eq:proportion}
    \ell_i \leq k\cdot\left(\frac{|P_i|}{m}-c\right),~ u_i\geq k\cdot \left(\frac{|P_i|}{m}+c \right)
    \end{equation}
    for all $i\in [p]$ and some small constant $c\geq \sqrt{\nicefrac{3\ln p}{k}}$.
    Given these assumptions, a random committee uniformly chosen from all committees of size $k$ is feasible with high probability by the sampling variant of Chernoff bound \footnote{see \cite{DBLP:books/daglib/0012859} for details.} and union bound.

\item \textbf{The single type condition ensures feasibility:}
Assume that for each type $i\in [p]$, there are sufficiently many candidates with \emph{only} this type $\gamma_i=\left |j\in [m]:T_j=\left\{i\right\} \right |\geq \ell_i$. Moreover, we assume $\sum_{i\in [q]}\min\{\gamma_i,u_i\}\geq k$.
This guarantees that we can always select $k$ candidates with single type satisfying all fairness constraints.

\item \textbf{The bounded-parameter condition ensures feasibility:}
Assume we are in a setting where the $u_i$s are unbounded; this is natural, e.g., when we simply want to ensure minority representations.
Further, assume that $\sum_{i\in [p]} \ell_i \leq k$.
Thus, we can ensure the committee includes at lest $\ell_i$ candidates belonging to $P_i$.
\end{enumerate}

\subsection{Price of Fairness}
\label{sec:example}
Enforcing fairness must naturally come at a cost -- the feasible space of committees becomes smaller and hence the optimal score may decrease.
This leads to a natural question: {To what extent does the score decrease by introducing the fairness constraints?}

In some cases, the constraints can result in an arbitrarily bad score; see the following examples.
The first example shows that even a small change of fairness parameters may lead to a significant difference in the optimal score.

\begin{example}
\label{ex:affectscore}
Consider a multiwinner voting instance with $k=2$, $m=50$, and $n=200$. Let $P_1=\left\{c_1,c_3\right\}$, $P_2=\left\{c_2,c_3\right\}$, $P_3=\left\{c_1,c_4\right\}$, $P_4=\left\{c_2,c_4\right\}$, and $P_5=\left\{c_3,c_4\right\}$. Voters $a_1,\ldots,a_{100}$ have a preference order $c_1>c_2>c_5>c_6>\ldots>c_{50}>c_3>c_4$, and the other 100 voters $a_{101},\ldots,a_{200}$ have a preference order $c_2>c_1>c_5>c_6>\ldots>c_{50}>c_4>c_3$. Suppose we use the $\beta$-CC rule. The optimal committee of the unconstrained case is $\left\{c_1,c_2\right\}$ of total score 49*200=9800.
\begin{enumerate}
\item If the fairness parameters are $\ell_i=u_i=1$ $(1\leq i\leq 4)$, $\ell_5=1$, and $u_5=2$. Then there is only one feasible committee $\left\{c_3,c_4\right\}$ of total score 1*200=200. We lose a lot of total score in this setting.
\item However, if we slightly change the fairness parameters by resetting $\ell_5=0$ (other parameters keep the same), then $\left\{c_1,c_2\right\}$ is a feasible committee of score 9800, which is equal to the unconstrained case.
\end{enumerate}
\end{example}

\noindent
The second example shows that even a slight amount of flexibility in the constraints (i.e., $\ell_i < u_i$)  can significantly improve the score as compared to existing  ``fixed'' notions of fairness such as fully proportional representation and fixed degressive proportionality (in which, effectively, $\ell_i = u_i$).

\begin{example}
\label{ex:flex}
Consider a multiwinner voting instance with $k=20$, $m=100$, and $n=1000$. Let $P_1=\left\{c_1,\ldots,c_{10}\right\}$, and $P_2=\left\{c_{11},\ldots,c_{90}\right\}$. Voters are also partitioned into two groups: 100 voters prefer type 1, and 900 voters prefer type 2. The goalbase score function is defined as follows: A committee has score 100 if it has exactly three members in $P_1$, and has score 0 otherwise. The optimal score of the unconstrained case is 100, which can be achieved by $\left\{c_8,c_9,\ldots,c_{27}\right\}$.
\begin{enumerate}
\item Suppose we require full proportionality, i.e., the size-20 committee consists of $\frac{20\cdot 100}{1000}=2$ candidates in $P_1$ and $\frac{20\cdot 900}{1000}=18$ candidates in $P_2$. In this case, we set $\ell_1=u_1=2$ and $\ell_2=u_2=18$. Any committee satisfying this condition has score 0.
\item Suppose we require the Penrose square root law, i.e., the size-20 committee consists of $\frac{20\cdot \sqrt{100}}{\sqrt{100}+\sqrt{900}}=5$ candidates in $P_1$ and $\frac{20\cdot \sqrt{900}}{\sqrt{100}+\sqrt{900}}=15$ candidates in $P_2$. In this case, we set $\ell_1=u_1=5$ and $\ell_2=u_2=15$. Any committee satisfying this condition has score 0.
\item Suppose we only require a flexible proportionality: the selected committee contains at least two representations in $P_1$ and at least 15 representations in $P_2$. In this case, we set $\ell_1=2$, $\ell_2=15$ and $u_1=u_2=20$. An optimal committee is $\left\{c_8,c_9,\ldots,c_{27}\right\}$ of score 100.
\end{enumerate}
\end{example}

\noindent
On the other hand, in natural settings, e.g., when each voter prefers candidates of a given type $i$ over all other candidates, the optimal constrained score remains close to the optimal unconstrained score no matter how we select the fairness parameters; see the following example.

\begin{example}
\label{ex:scorelarge}
Consider a multiwinner voting instance satisfying the following conditions.
\begin{enumerate}
\item For each $i\in [p]$, we have $|P_i|\leq k$. Moreover, $p\leq k$.
\item Each voter $a$ has a preferred type $i$, and each candidate in $P_i$ is one of her favourite $k$ candidates.
\end{enumerate}
\noindent
Consider any feasible fairness constraints satisfying that $l_i\geq 1$ for all $i\in [p]$. We have the following observations.
\begin{enumerate}
\item Suppose we use the $\alpha$-CC rule. Since each group has at least one representation, the positional score of each voter is 1. Therefore, the optimal score must be $n$, which is equal to the unconstrained case.
\item Suppose we use the $\beta$-CC rule. The optimal score of the unconstrained case is at most $(m-1)n$. By the condition of fairness constraints, each group has at least one representation. Hence the positional score of each voter is at least $m-k$. Therefore, the optimal score is at least $(m-k)n$, no matter how we choose the fairness parameters.
\end{enumerate}
\end{example}

\noindent
Finally, we give an example illustrating that our framework can ensure fairness across multiple attributes and loss few score as compared to the unconstrained version.

\begin{example}
\label{ex:multivote}
Consider a multiwinner voting instance with $k=4$, $m=8$, and $n=200$. There are two attributes: gender and ethnicity, and four groups: men group $P_1=\left\{c_1,c_2,c_5,c_6\right\}$, women group $P_2=\left\{c_3,c_4,c_7,c_8\right\}$, Caucasian group $P_3=\left\{c_1,c_2,c_3,c_4\right\}$, and African-American group $P_4=\left\{c_5,c_6,c_7,c_8\right\}$. The preference orders are given as follows.
\begin{enumerate}
\item Voters $a_1,\ldots,a_{50}$ have a preference order $c_1>c_3>c_4>c_2>c_5>c_6>c_7>c_8$.
\item Voters $a_{51},\ldots,a_{100}$ have a preference order $c_2>c_4>c_3>c_1>c_5>c_6>c_7>c_8$.
\item Voters $a_{101},\ldots,a_{150}$ have a preference order $c_5>c_7>c_8>c_6>c_1>c_2>c_3>c_4$.
\item The last fifty voters $a_{151,\ldots,200}$ have a preference order $c_6>c_8>c_7>c_5>c_1>c_2>c_3>c_4$.
\end{enumerate}
\noindent
Suppose we use the $\beta$-CC rule. The optimal committee of the unconstrained case is $\left\{c_1,c_2,c_5,c_6\right\}$ of total score 7*200=1400. However, this committee consists of four men and lacks fairness in gender.

Our framework can overcome this problem by setting $\ell_i=u_i=2$ for all $1\leq i\leq 4$, i.e., we require the chosen committee to consist of two men and two women, while consisting of two Caucasians and two African-Americans. Then an optimal committee is $\left\{c_1,c_4,c_5,c_8\right\}$ of total score 7*100+6*100=1300.
Thus by introducing the fairness constraints, the optimal committee is fair both of gender and ethnicity and loses only a small amount of score.
\end{example}


\subsection{Properties of Voting Rules with Fairness Constraints}
\label{sec:properties}
There are various properties that generally one would like a multiwinner voting rule to satisfy.
For example, Elkind et al. \cite{elkind2014properties} show that the single non-transferable vote (SNTV) rule satisfies many properties including committee monotonicity, solid coalitions, consensus committee, weak unanimity, monotonicity, homogeneity, and consistency.
A variety of nice properties continue to be satisfied by the scoring rules in the presence of fairness constraints. We present some examples below.

Consistency means that if $S$ is an optimal committee
for voters $A_1$ and also for voters $A_2$, then $S$ must also be an optimal committee for the set of voters $A_1\cup A_2$.
As the fairness constraints restrict the set of feasible committees in the same way for any set of voters, the argument for consistency remains the same (see \cite[Theorem 7]{elkind2014properties}).
In fact, consistency holds for any committee scoring rule with fairness constraints.
Similarly, by the same arguments as in \cite{elkind2014properties}, if a committee scoring rule satisfies monotonicity (respectively, homogeneity)
then the same scoring rule in the presence of fairness constraints still satisfies monotonicity (respectively, homogeneity).

On the other hand, the remaining properties
may not be preserved.
The problem arises because the fairness constraints can make certain committees infeasible and thus force the desired property to be violated.
Feasibility, however, appears to be the only bottleneck and motivates the definition of corresponding fair properties.
For example, weak unanimity states that if a set of $k$ candidates $S$ dominates (see \cite[Section 2.2.1]{faliszewski2017multiwinner}) all other size $k$ subsets of candidates with respect to any voter's preference list, then $S$ must be an optimal committee.
We can instead define a fair version of weak unanimity; see the following definition.

\begin{defn} (Fair weak unanimity.) Let $\calB$ be the collection of all committees of size $k$ that satisfy all fairness constraints. If $S\in \calB$ dominates (see \cite[Section 2.2.1]{faliszewski2017multiwinner} for the definition)
any other committee in $\calB$ with respect to any voter's preference list,
then $S$ must be an optimal committee.
\end{defn}

\noindent
Any committee scoring rule that satisfies such weak unanimity (or, similarly, committee monotonicity) will satisfy the fair version of these properties in the presence of fairness constraints.

\subsection{Incorporating Fariness in Multiwinner Voting Rules}
\label{sec:incorporating}
Some multiwinner voting rules are not defined by a score function; e.g., the single transferable vote (\emph{STV}) rule \cite{tideman1995single} and the {\em Greedy} Monroe rule \cite{skowron2015achieving}, which select candidates in rounds instead of selecting the whole committee simultaneously.
It is unclear how fairness constraints can be added to such voting rules and we leave this problem for future work.

\section{Other Related Work}
\label{sec:related}

The study of total score functions and their resulting optimization problems have received much attention in recent years.
Often the optimization problem turns out to be NP-hard; both $\alpha$-CC and $\beta$-CC are NP-hard~\cite{procaccia2008complexity},
$1-1/e$ is the best approximation ratio for $\alpha$-CC~\cite{skowron2015achieving},
the Monroe rule is computationally hard even if the voting parameters are small~\cite{betzler2013computation},
and the OWA-based rules are hard in general ~\cite{skowron2016finding}
as are the goalbase rules in various settings~\cite{uckelman2009representing}.
Hence, one must largely resort to developing approximation algorithms for these problems.
Towards this, there has been a rich line of work \cite{lu2011budgeted,skowron2015achieving,skowron2015achieving,skowron2016finding}.
The majority of score functions for multiwinner voting rules that have been studied are monotone submodular.
Algorithm design for such score functions
have benefitted from theoretical developments in the area of monotone submodular function maximization
\cite{nemhauser1978analysis,calinescu2011maximizing}.

Some recent work also considers aspects of fairness in voting.
Goalbase score functions, which specify an arbitrary set of logic constraints and let the score  capture the number constraints satisfied~\cite{uckelman2010alice,uckelman2009representing}, could be used to ensure fairness.
However, there are no known efficient algorithms to solve goalbase functions.
{Some recent literature studies single-winner voting in the multi-attribute setting; see the survey of Lang and Xia~\cite{DBLP:reference/choice/LangX16}.
	The bi-apportionment model can handle up to two attributes (often political party and district)~\cite{serafini2012parametric,lari2014bidimensional}.
	Another related model is called constraint approval voting (CAP), with constraints on the numbers of winners from different categories of candidates, proposed by Brams~\cite{brams1990constrained} and Potthoff~\cite{potthoff1990use}.
	However, there is no efficient algorithm since the input in CAP is exponentially large in the number of attributes.
}
Lang and Skowron~\cite{DBLP:conf/aaai/LangS16} also consider the problem of committee selection with multiple partitions; however, the goal is to produce a committee close to a given target composition of attributes as opposed to maximizing a score function.

More generally, our results contribute to the growing set of algorithms that incorporate fairness constraints to counter algorithmic bias in fundamental algorithmic problems such as classification \cite{dwork2012fairness,zemel2013learning,zafar2017fairness,zafar2017fair}, 
sampling \cite{FatML,celis2017complexity,celis2018fair}, ranking \cite{celis2018ranking,yang2017measuring} and personalization \cite{CV17}. 
Independently of this paper, \cite{bredereck2017multiwinner}  also propose a  model for multiwinner voting with a type of fairness constraints (referred to as \emph{diversity constraints}).

\section{Hardness Results}
\label{sec:hard}

\noindent
In this section we present our hardness results for the constrained multiwinner voting problem.
Our first theorem addresses the complexity of the feasibility problem. Recall that $\Delta$ is the maximum number of groups in which a candidate can be.
The following theorem shows that when a candidate may be part of $3$ or more groups, just the feasibility problem can become NP-hard; even under mild feasibility conditions.
\begin{theorem}
\label{thm:Delta3hard} (NP-hardness of feasibility: $\Delta\geq 3$)
The constrained multiwinner voting feasibility problem is NP-hard for any $\Delta\geq 3$. This is true even if we assume that all $\ell_i=0$ or all $u_i= |P_i|$.
\end{theorem}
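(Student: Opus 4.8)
The plan is to prove NP-hardness of the feasibility problem by a reduction from a well-known NP-complete problem. The natural candidate is \textbf{exact cover by 3-sets (X3C)} or, even more directly, \textbf{3-dimensional matching (3DM)}, since both have the ``each element lies in exactly/at most 3 sets'' flavor that matches the constraint $\Delta=3$. I would work with 3DM: we are given three disjoint sets $X,Y,Z$ each of size $q$, and a collection $M$ of triples in $X\times Y\times Z$; the question is whether $M$ contains a perfect matching, i.e., $q$ triples covering every element of $X\cup Y\cup Z$ exactly once.

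The reduction goes as follows. Create one candidate $c_t$ for each triple $t\in M$; so $m=|M|$. Set $k=q$ (we want to pick exactly $q$ triples). For each element $e\in X\cup Y\cup Z$, create a group $P_e=\{c_t : e\in t\}$; so $p=3q$. Since each triple contains exactly one element of $X$, one of $Y$, and one of $Z$, each candidate $c_t$ lies in exactly $3$ groups, giving $\Delta=3$. Now impose the fairness constraints $\ell_e=u_e=1$ for every $e$. A size-$k$ committee $S$ is feasible iff it selects exactly $q$ triples such that each element is covered exactly once — that is, iff $S$ corresponds to a perfect 3-dimensional matching. Hence feasibility of the constructed instance is equivalent to the existence of a 3DM, establishing NP-hardness for $\Delta=3$. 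For general $\Delta\ge 3$ one can simply add $\Delta-3$ extra dummy groups each equal to the whole candidate set $C$ with $\ell_i=0$, $u_i=|C|$; this raises the per-candidate group count to $\Delta$ without changing feasibility.

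For the strengthening ``even if all $\ell_i=0$ or all $u_i=|P_i|$'': note that the construction above uses both lower and upper bounds, so I need two separate arguments. For the ``all $u_i=|P_i|$'' case (only lower bounds are binding), I would reduce instead from \textbf{set cover} restricted to set systems of frequency $3$, or better, observe that requiring $\ell_e=1$ for all $3q$ elements while $|S|=k$ forces a cover of all elements by $k$ triples; choosing $k=q$ again forces an exact cover, so we recover X3C. The only subtlety is that without upper bounds a committee could in principle ``waste'' a slot — but since $k=q$ and we must cover $3q$ elements each appearing in triples of size $3$, covering is only possible with no waste, so feasibility again equals existence of an exact cover. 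For the ``all $\ell_i=0$'' case (only upper bounds binding), the trick is to complement: put $u_e=0$ on a carefully chosen set of ``forbidden'' groups so that the surviving candidates must again form a matching — concretely, one reduces from the complement structure or introduces, for each element $e$, a group of the non-incident candidates; I would flesh out which of these gives the cleanest $\Delta=3$ bound. I expect this second strengthening — getting the ``only upper bounds'' variant to still have $\Delta=3$ — to be the main obstacle, since the most obvious encodings of ``cover everything'' want lower bounds, and expressing the same constraint purely through upper bounds typically inflates the number of groups a candidate belongs to; resolving it will likely require a reduction from a hardness problem that is naturally phrased with ``at most'' constraints, such as \textbf{independent set in $3$-uniform or bounded-degree hypergraphs}, which I would use to define the forbidden-pair groups.
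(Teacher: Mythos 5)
Your main reduction (candidates $=$ triples of a 3DM instance, groups $=$ elements, $k=q$) is essentially the paper's construction, and your treatment of the ``all $u_i=|P_i|$'' case is correct: with $k=q$, $\ell_e=1$ for all $3q$ elements, and each triple covering only $3$ elements, feasibility forces an exact cover, so that one-sided case is done (the paper instead reduces from vertex cover in $3$-regular graphs, with candidates $=$ vertices, one group per edge, $\ell_i=1$; both routes are fine). However, there is a genuine gap in the other half of the strengthening, the ``all $\ell_i=0$'' case, which you yourself flag as the main obstacle and leave unresolved: the complementation / ``forbidden non-incident candidates'' ideas you sketch would either not obviously work or would inflate the number of groups per candidate well beyond $3$, so as written the theorem's claim for upper-bound-only constraints is not established.

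The frustrating part is that the fix is already contained in your own construction: simply drop the lower bounds. Keep $u_e=1$ for every element $e$, set $\ell_e=0$, and keep $k=q$. A feasible committee is then $q$ pairwise element-disjoint triples, i.e.\ a matching of size $q$; since the universe has exactly $3q$ elements and $q$ disjoint triples cover exactly $3q$ of them, such a matching is automatically perfect. Hence feasibility of this upper-bound-only instance is equivalent to the 3DM instance having a perfect matching, and each candidate still lies in exactly $3$ groups. This is precisely what the paper does (phrased as a reduction from $\Delta$-uniform hypergraph matching: candidates $=$ hyperedges, groups $=$ vertices, $u_i=1$, and a feasible size-$k$ committee is a size-$k$ matching), and it also handles $\Delta\geq 3$ directly without your dummy-group padding, though that padding is harmless. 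With this observation inserted, your proposal becomes a complete and correct proof of the full statement.
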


\begin{proof}
For the case that all $u_i= |P_i|$ and $\Delta\geq 3$, we reduce from $\Delta$-hypergraph matching, which is known to be NP-hard \cite{hazan2003complexity}. The reduction is inspired by the NP-hard argument for the constrained ranking feasibility problem \cite[Theorem 3.1]{celis2018ranking}.
Let $G=(V,E)$ be a $\Delta$-hypergraph with $|V|=p$ and $|E|=m$. We construct a constrained multiwinner voting instance as follows. For each hyperedge $e_i\in E$, we construct a corresponding candidate $c_i$. For each vertex $v_i\in V$, we construct a corresponding group $P_i=\left\{c_j:v_i\in e_j\right\}$. Thus, there are $p$ groups and $m$ candidates.
For each group $P_i$, let $\ell_i=0$ and $u_i=1$, meaning that each group can be hit at most once. It corresponds to the requirement that each vertex can be covered by at most one hyperedge. %
Hence $G$ admits a matching of size $k$ if and only if there is a feasible committee of size $k$ in the constrained multiwinner voting instance, which completes the proof.

For the case that all $u_i= |P_i|$ and $\Delta\geq 3$, we present a reduction from the 3-regular vertex cover problem, which is known to be NP-hard \cite{DBLP:conf/ciac/AlimontiK97}. Let $G=(V,E)$ be a 3-regular graph with $|V|=m$ and $|E|=p$. The problem is to test whether there is a vertex cover of size $k$.
We construct a constrained multiwinner voting instance whose feasibility is equivalent to $G$ having a vertex cover of size $k$ as follows.
We construct a corresponding candidate $c_i$ for each vertex $v_i\in V$, and a corresponding group $P_i=\left\{c_j,c_{j'}\right\}$ for each edge $e_i=(v_j,v_{j'})$. Thus, there are $m$ candidates, $p$ groups and the degree $\Delta=3$.
Let $\ell_i=1$ for all $i\in [p]$, i.e., each group has at least one representation. It corresponds that each vertex in $G$ should be covered by some chosen edge.
Therefore, it can be concluded that each vertex cover of size $k$ corresponds to a feasible committee of size $k$, which finishes the proof.
\end{proof}

\noindent
The next two theorems show that the feasibility problem remains hard even if one allows the size of the committee or the fairness constraints to be violated.
\begin{theorem} (Hardness of feasibility with committee violations)
\label{thm:lowerhard}
Let $u_i= |P_i|$ for all $i\in [p]$. For any  $\varepsilon>0$, the following  gap violation variant of the constrained multiwinner voting feasibility problem is NP-hard.
\begin{enumerate}
\item Output YES if the input instance is feasible.
\item Output NO if there is no feasible solution of size less than $(1-\varepsilon)k\ln p$.
\end{enumerate}
\end{theorem}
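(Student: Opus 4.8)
The plan is to reduce from the \emph{set cover} problem (equivalently, dominating set), exploiting the known $(1-\varepsilon)\ln n$ hardness of approximating the minimum set cover due to Dinur--Steurer (or Feige). Given a set cover instance with universe $U = \{e_1,\ldots,e_p\}$ and a family $\mathcal{S} = \{S_1,\ldots,S_m\}$ of subsets of $U$, I would construct a constrained multiwinner voting instance as follows: create one candidate $c_j$ for each set $S_j$, and one group $P_i = \{c_j : e_i \in S_j\}$ for each universe element $e_i$. Set $\ell_i = 1$ and $u_i = |P_i|$ for all $i \in [p]$, so that the only binding constraints are the lower bounds, and a committee is feasible iff the chosen candidates correspond to sets that cover $U$. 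The desired committee size is the parameter $k$ of the gap problem; the question ``is there a feasible committee of size $\le t$'' is precisely ``is there a set cover of size $\le t$.''

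Next I would set up the gap. Let $\kappa$ denote the minimum set cover size of the instance. Choose $k = \kappa$ exactly (or, to make this an honest decision problem, iterate over all candidate values of $k$ from $1$ to $m$ — only polynomially many). If the set cover optimum is $\kappa$, then a feasible committee of size $k = \kappa$ exists, so we output YES. On the other hand, by the hardness of approximating set cover, it is NP-hard to distinguish instances with optimum $\kappa$ from instances whose optimum exceeds $(1-\varepsilon)\kappa \ln p$ — and here $\kappa$ can be taken comparable to $k$ after the usual padding/scaling so that $(1-\varepsilon)k \ln p$ is the right threshold. In the NO case, there is no set cover, hence no feasible committee, of size less than $(1-\varepsilon)k \ln p$, matching the second bullet of the theorem. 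Since the $u_i = |P_i|$ requirement is automatically satisfied by the construction (the upper bounds are vacuous), the instance is exactly of the form demanded.

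The one technical subtlety — and the main obstacle — is aligning the parameters so that the gap is stated in terms of $(1-\varepsilon)k \ln p$ rather than $(1-\varepsilon)\kappa \ln |U|$. In the standard set cover hardness, $\kappa$ need not equal $k$, and $\ln$ is taken with respect to the universe size; one must argue that after a polynomial reduction one may assume $\kappa \asymp k$ and $\ln p$ absorbs the discrepancy (replacing $\varepsilon$ by a slightly smaller constant if necessary). This can be handled by a padding argument: add a small number of ``dummy'' universe elements each covered by a dedicated singleton set, which forces those singletons into any cover, shifting $\kappa$ additively by a controlled amount, and repeat/duplicate sets to tune the ratio between $p$ and $\kappa$. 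A cleaner alternative is to invoke the parameterized/gap version of dominating set hardness directly, which already phrases inapproximability with the $\ln(\cdot)$ in terms of the instance size. I expect the bulk of the writeup to be this bookkeeping; the core reduction — candidates are sets, groups are elements, lower bounds encode coverage — is immediate and mirrors the reduction techniques borrowed from \cite{celis2018ranking}.
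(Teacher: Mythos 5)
Your reduction is essentially the paper's: candidates are sets, groups are universe elements, and lower bounds encode coverage, with hardness imported from the $(1-\varepsilon)\ln(\cdot)$ inapproximability of (multi-)set cover; the paper reduces from constrained set multi-cover with $\ell_i=w_i$ where you use plain set cover with $\ell_i=1$, an immaterial difference. The ``main obstacle'' you flag is not actually an obstacle: the gap form of set-cover hardness already supplies the threshold $k$ as part of the instance, and in your construction $p$ equals the universe size so $\ln p$ is exactly the right logarithm, hence no padding or rescaling of $\kappa$ versus $k$ is needed.
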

%

\begin{proof}
We make a reduction from the following constrained set multi-cover problem \cite{berman2007randomized}, which is NP-hard to approximate within a factor $(1-\varepsilon)\ln p$ by \cite{berman2007randomized}. Given a ground set $X=\{v_1,\ldots,v_p\}$ together with a weight function $w:X\rightarrow \Z_{\geq 0}$, and a collection of $m$ sets $S_i\subseteq X$, the goal is to choose some sets of minimum cardinality covering each element $v_i$ by a factor of at least $w_i$.

Then we construct a constrained multiwinner voting instance. Construct a corresponding candidate $c_i$ for each set $S_i$, and a corresponding group $P_i=\{c_j:v_i\in S_j\}$ for each element $v_i$. Hence there are $m$ candidates and $p$ groups. Define $\ell_i$ to be $w_i$ for all $i\in [p]$. Observe that a feasible committee of size $k$ exists if and only if the minimum cardinality of the constrained set multi-cover problem is at most $k$.
Therefore, a polynomial-time algorithm for the constrained multiwinner voting feasibility problem implies a $(1-\varepsilon)\ln p$-approximation algorithm for the constrained set multi-cover problem. This fact completes the proof.
\end{proof}

\begin{theorem} (Hardness of feasibility with fairness  violations)
\label{thm:constantviolation}
Assume $\ell_i=0$ for each $i\in [p]$. For every  $\theta > 1$, the following violation variant of the constrained multiwinner voting feasibility problem is NP-hard.
\begin{enumerate}
\item Output YES if the input instance is feasible.
\item {Output NO if there is no solution which violates every fairness constraint by a factor of at most $\theta$, i.e., there is no solution $S$ of size $k$ such that $|S\cap P_i|\leq \theta u_i$.}
\end{enumerate}
\end{theorem}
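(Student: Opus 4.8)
The plan is to reduce from an NP-hard covering-type problem, mirroring the structure of the proof of Theorem~\ref{thm:lowerhard} but keeping track of the \emph{upper} bounds $u_i$ rather than the committee size. Since $\ell_i=0$ for all $i$, a feasible committee is simply a size-$k$ set $S$ with $|S\cap P_i|\le u_i$ for every $i$; the gap version asks us to distinguish ``feasible'' from ``not even $\theta$-approximately feasible in the constraint sense.'' The natural source problem is a hardness-of-approximation result for a packing problem where no polynomial algorithm can tell apart instances admitting a perfect packing of a given size from instances where every set of that size overloads some element by a factor more than $\theta$; $\Delta$-hypergraph matching (used already in Theorem~\ref{thm:Delta3hard}) and, more to the point, the inapproximability of hypergraph $b$-matching / set packing give exactly this kind of multiplicative gap for every constant $\theta>1$ once $\Delta$ is allowed to grow.

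First I would fix the reduction template: given an instance of the chosen packing problem on a ground set $X=\{v_1,\dots,v_p\}$ with sets $S_1,\dots,S_m\subseteq X$ and a target size $k$, create one candidate $c_j$ per set $S_j$ and one group $P_i=\{c_j: v_i\in S_j\}$ per element $v_i$, and set $\ell_i=0$, $u_i=1$ (or a prescribed capacity $b_i$) for all $i$. Then a size-$k$ committee $S$ corresponds exactly to a choice of $k$ sets, and $|S\cap P_i|$ is the number of chosen sets containing $v_i$; feasibility of the voting instance is equivalent to the chosen sets forming a valid packing of size $k$, while a $\theta$-violating committee of size $k$ corresponds to $k$ sets that cover each element at most $\theta u_i$ times. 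Second, I would invoke the known hardness: it is NP-hard to distinguish instances that have a packing of size $k$ from instances in which every collection of $k$ sets overloads some element by more than a factor $\theta$ — this is precisely the ``YES vs.\ NO'' dichotomy in the statement. Composing this with the reduction gives NP-hardness of the gap variant of the constrained multiwinner voting feasibility problem.

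The main obstacle is sourcing a clean off-the-shelf inapproximability result that yields a \emph{multiplicative constraint-violation gap for an arbitrary constant $\theta>1$}, as opposed to the additive or size-based gaps used in Theorems~\ref{thm:Delta3hard} and~\ref{thm:lowerhard}. One route is to use the APX-hardness of $b$-matching in hypergraphs, or the fact that set packing is NP-hard to approximate within $m^{1-\eps}$, and then amplify the gap by a standard product/blow-up construction so that the relevant ratio exceeds any fixed $\theta$; another is to go through independent set on bounded-degree graphs (listed among the paper's reduction sources) and pad groups so that violating a constraint by a factor $\theta$ would imply too large an independent set. I would pick whichever gives the simplest self-contained argument, state the gap-hardness as the cited input, and then the reduction above is routine. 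A secondary point to verify is that the construction respects any ambient $\Delta$ assumption: the degree $\Delta$ equals the maximum number of sets containing a common element, so the instance must be chosen (or the gap amplified) within the bounded-$\Delta$ regime if one wants the hardness to persist for small $\Delta$; since the theorem statement imposes no bound on $\Delta$, this is not strictly necessary here, but I would remark on it for consistency with the surrounding results.
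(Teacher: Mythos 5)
There is a genuine gap: your argument defers precisely the step that constitutes the paper's proof. The reduction template (candidate per set, group per element, $u_i=1$, $\ell_i=0$) is routine; the entire difficulty is producing a source problem where it is NP-hard to distinguish ``a size-$k$ solution respecting the unit capacities exists'' from ``no size-$k$ solution respects even $\theta$ times the capacities,'' for \emph{every} constant $\theta>1$. No such statement is available off the shelf, and your main suggested route does not survive a quantitative check: if the groups are the elements of a set-packing/$\Delta$-hypergraph-matching instance with $u_i=1$, then a committee of size $k$ violating each constraint by a factor $\theta$ is a $\theta$-fold packing, and the natural way to use it is to extract a genuine packing greedily, which only yields size about $k/(\Delta(\theta-1)+1)$. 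Since the known inapproximability gap for $\Delta$-hypergraph matching is only $\Omega(\log\Delta/\Delta)$ (the same source as Theorem~\ref{thm:approxhard}), the $\Theta(\Delta)$ loss in the extraction cancels the entire hardness gap, so ``amplifying'' within the packing world does not obviously go through; at the very least this is where a new argument is needed, and you do not supply one.

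The paper avoids this by going through independent set, whose $|V|^{1-\varepsilon}$ inapproximability is strong enough to absorb a much larger loss. Concretely: candidates are the vertices, and for \emph{every} clique of size about $\theta+1$ one introduces a group with $u_i=1$ (only $m^{O(\theta)}$ groups, so the instance is polynomial for constant $\theta$). An independent set of size $k$ is a feasible committee, while any size-$k$ committee violating each constraint by a factor at most $\theta$ contains no $(\theta+1)$-clique, and Ramsey's theorem then extracts an independent set of size $\Omega\big(k^{1/(\theta+1)}\big)$; a polynomial-time algorithm for the gap problem would therefore approximate maximum independent set within $|V|^{1-1/(\theta+1)}$, a contradiction. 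You gesture at this direction in one clause (``go through independent set \dots and pad groups''), but the two essential ideas -- taking \emph{all} small cliques as groups with unit upper bounds, and using Ramsey numbers to convert clique-freeness of a $\theta$-violating committee into a polynomially large independent set -- are absent, so the proposal as written does not yet constitute a proof of Theorem~\ref{thm:constantviolation}.
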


\begin{proof}
Similar to \cite[Theorem 3.4]{celis2018ranking}, we use the inapproximability of independent set \cite{Hastad96,zuckerman2006linear} to prove the theorem. It is NP-hard to approximate the cardinality of the maximum independent set problem in undirected graphs within a factor of $|V|^{1-\varepsilon}$ for any constant $\varepsilon>0$.

Given a graph $G=(V,E)$ with $|V|=p$ and $|E|=m$ and a number $k$, the goal is to check whether there exists a independent set of size $k$ in $G$.
For each vertex $v_i\in V$, we construct a corresponding candidate $c_i$. For every cardinality-$\theta$ clique, we construct a property and set its fairness upper bound to be 1. Observe that there are at most $m^{\theta+1}=poly(m)$ fairness constraints.

We have the following claim:
\begin{enumerate}
\item If there is a committee of size $k$ that violates every fairness constraint by a factor of at most $\theta$, then there is an independent set of cardinality $\Omega(k^{1/(\theta+1)})$ in $G$.
\item If there is no feasible committee of size $k$, then there is no cardinality-$k$ independent set.
\end{enumerate}
\noindent
If the above claim holds, then a polynomial-time algorithm for the constrained multiwinner voting feasibility problem implies a $|V|^{1-1/(\theta+1)}$-approximation algorithm for the maximum independent set problem. Hence it remains to prove the claim.

For the first claim, if a committee of size $k$ that violates every fairness constraint by a factor of by a factor of at most $\theta$ exists, then we have a subset $S\subseteq V$ of size-$k$ that does not contain any $(\theta+1)$-clique. By a standard upper-bound on the Ramsey number $R(k^{1/(\theta+1)},\theta)$, there exists an independent set of cardinality $\Omega(k^{1/(\theta+1)})$ in $G$.
The second claim is not hard, since every independent set of size $k$ implies a feasible committee of size $k$ for the constrained multiwinner voting instance.
\end{proof}

\noindent
Further, we show that even if we assume that we know a feasible instance of the constrained multiwinner voting problem, the hardness does not go away.
\begin{theorem} (Inapproximability for feasible instances)
\label{thm:approxhard}
A feasible constrained multiwinner voting problem that satisfies $\ell_i=0$ for each $i\in [p]$ is NP-hard to approximate within a factor of $\Omega(\log \Delta/\Delta)$.
\end{theorem}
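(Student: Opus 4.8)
\noindent The plan is to give an approximation-preserving reduction from $\Delta$-dimensional matching --- equivalently, maximum matching in a $\Delta$-uniform hypergraph --- which by Hazan, Safra and Schwartz~\cite{hazan2003complexity} is NP-hard to approximate within a factor $\Omega(\log\Delta/\Delta)$. Since the reduction preserves the approximation ratio up to a constant, this inapproximability transfers to the constrained MS multiwinner voting problem, even on instances that are feasible and have $\ell_i=0$ for all $i$.

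\smallskip\noindent\textbf{The reduction.} Let $G=(V,E)$ be a $\Delta$-uniform hypergraph with $|E|=m$, and let $\mu^\star$ be the size of a maximum matching in $G$. Introduce a candidate $c_e$ for each hyperedge $e\in E$ and, in addition, $m$ \emph{dummy} candidates that belong to no group. For each vertex $v\in V$ let $P_v=\{\,c_e : v\in e\,\}$ and set $\ell_v=0$, $u_v=1$. Each hyperedge has exactly $\Delta$ vertices, so every candidate lies in at most $\Delta$ groups and the membership parameter of the instance is exactly $\Delta$. Set $k=m$ and let the score function be the coverage function $\score(T)=\bigl|\{\,v\in V : T\cap P_v\neq\emptyset\,\}\bigr|$; this is monotone submodular (it is, e.g., the number of satisfied disjunctions $\bigvee_{e\ni v} x_{c_e}$ in a goalbase formulation), so the resulting instance is a legitimate constrained MS multiwinner voting instance.

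\smallskip\noindent\textbf{Correctness.} Because every $\ell_v=0$ and the $m$ dummies lie in no group, the committee consisting of all $m$ dummies satisfies every fairness constraint; hence the instance is feasible, as demanded. A feasible committee is the union of a set $\{c_{e_1},\dots,c_{e_t}\}$ of real candidates with $m-t$ dummies, and the constraints $u_v=1$ say exactly that $e_1,\dots,e_t$ are pairwise disjoint, i.e.\ form a matching; since $G$ is $\Delta$-uniform this committee has score $\Delta t$, and dummies contribute nothing. Therefore the optimum of the voting instance is exactly $\Delta\mu^\star$ (a maximum matching uses $\mu^\star\le m=k$ candidates and is padded to size $k$ by dummies), and any committee of score at least $\rho\,\Delta\mu^\star$ yields a matching of size at least $\rho\mu^\star$. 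Thus a polynomial-time $\rho$-approximation for the feasible, $\ell_i=0$ constrained MS multiwinner voting problem gives a polynomial-time $\rho$-approximation for $\Delta$-dimensional matching, so $\rho=\Omega(\log\Delta/\Delta)$ is impossible unless $P=NP$.

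\smallskip\noindent\textbf{Main obstacle.} The two points that require care are ensuring the constructed instance is always feasible without weakening the reduction --- handled by the dummy candidates, which are transparent both to the fairness constraints and to the coverage score --- and ruling out that a clever feasible committee could exploit the submodular objective to beat $\Delta\mu^\star$; working with $\Delta$-\emph{uniform} (rather than merely $\Delta$-bounded) hypergraphs is exactly what makes the coverage score an honest multiple of the matching size and closes this gap. The sole external input is the $\Omega(\log\Delta/\Delta)$-inapproximability of $\Delta$-uniform hypergraph matching.
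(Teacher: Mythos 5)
Your proposal is correct and follows essentially the same route as the paper: a reduction from maximum $\Delta$-hypergraph matching (Hazan--Safra--Schwartz), encoding hyperedges as candidates, vertices as groups with $u_v=1$, and adding group-free dummy candidates to guarantee feasibility. The only difference is that the paper instantiates the score with the SNTV rule (one voter per real candidate, so the optimum equals the matching size and the hardness holds even for this simple linear rule), whereas you use a coverage/goalbase score with optimum $\Delta\mu^\star$; both yield the same approximation-preserving reduction and establish the stated theorem.
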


\begin{proof}
It suffices to prove the SNTV case. The reduction is from maximum $\Delta$-hypergraph matching, which is hard to approximate within a factor of $\Omega(\log \Delta/\Delta)$ by \cite{hazan2003complexity}.

Let $G=(V,E)$ be a $\Delta$-hypergraph with $|V|=p$ and $|E|=m$.
We use the same construction of the constrained multiwinner voting instance as in the proof of Theorem~\ref{thm:Delta3hard}. In addition, we construct another $k$ candidates $c'_{1},\ldots,c'_{k}$ which do not belong to any group. Note that this instance is satisfiable since $\left\{c'_1,\ldots,c'_k\right\}$ is a feasible committee.
Then we define the total score function. We first construct $m$ voters and let voter $a_i$ most prefer to $c_i$ ($i\in [m]$). According to the SNTV rule, a feasible committee consisting of $k_1$ candidates from $\left\{c_i\right\}$ and $k-k_1$ candidates from $\left\{c'_i\right\}$ has score $k_1$. On the other hand, such a committee corresponds to a cardinality-$k_1$ matching of the maximum $\Delta$-hypergraph matching problem.
Therefore, to compute the optimal score is equivalent to finding the maximum $\Delta$-hypergraph matching in $G$. Since the reduction is approximation preserving, we finish the proof.
\end{proof}


\section{Algorithmic Results}
\label{sec:alg}
In this section we present our main algorithmic results. We first introduce a useful notion called ``multilinear extension" for monotone submodular optimization.

\begin{defn} (Multilinear extension)
\label{def:multilinear}
Given a monotone submodular function $f:\{0,1\}^m\rightarrow \R_{\geq 0}$,  the multilinear extension $F:[0,1]^m\rightarrow \R_{\geq 0}$ is defined as follows:
For $y=\left(y_1,\ldots, y_m\right)\in [0,1]^m$, denote $\hat{y}$ to be a random vector in
$\left\{0,1\right\}^m$ where the $j$th coordinate is independently rounded to 1 with probability $y_j$ or 0 otherwise. Then we let
$$
F(y)=\Exp\left[f(\hat{y})\right]=\sum_{R\subseteq [m]} f(R) \prod_{i\in R} y_i \prod_{j\notin R} (1-y_j).
$$
\end{defn}  \noindent
\noindent Let $\e_i=(0,\ldots,0,1,0,\ldots,0)\in \{0,1\}^{m}$.
\begin{lemma}
\label{claim:convex}
\cite{calinescu2011maximizing}
For any $y\in [0,1]^m$ and $i,j\in [m]$,  $F^y_{ij}(t):=F\big(y+t(\e_i-\e_j)\big)$ is convex.
\end{lemma}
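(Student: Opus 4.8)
The plan is to show that $F^y_{ij}(t) := F\big(y + t(\e_i - \e_j)\big)$ is convex by exhibiting it as a quadratic polynomial in $t$ with nonnegative leading coefficient (and in fact, we will see the leading coefficient vanishes, so it is affine — but quadratic-with-nonnegative-leading-coefficient is the robust framing). First I would use the explicit sum formula for the multilinear extension: $F(z) = \sum_{R \subseteq [m]} f(R) \prod_{a \in R} z_a \prod_{b \notin R} (1 - z_b)$. Substituting $z = y + t(\e_i - \e_j)$ only changes the $i$th and $j$th coordinates, namely $z_i = y_i + t$ and $z_j = y_j - t$, while $z_a = y_a$ for $a \neq i,j$. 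Since each monomial $\prod_{a\in R} z_a \prod_{b\notin R}(1-z_b)$ is multilinear — degree at most $1$ in each coordinate — it is degree at most $1$ in $z_i$ and degree at most $1$ in $z_j$, hence after the substitution each monomial is a polynomial of degree at most $2$ in $t$. Summing over $R$, $F^y_{ij}(t)$ is a polynomial of degree at most $2$ in $t$.

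Next I would compute the coefficient of $t^2$ and show it is nonnegative (in fact zero). Grouping the $2^m$ terms according to their intersection with $\{i,j\}$, only the terms where exactly one of $i,j$ contributes a factor of $z_i = y_i+t$ and the other contributes a factor of $(1 - z_j) = 1 - y_j + t$ (or symmetrically) produce a $t^2$ term. Concretely, for each $Q \subseteq [m] \setminus \{i,j\}$ write $g(Q,\cdot)$ for the common factor $\prod_{a \in Q} y_a \prod_{b \in [m]\setminus(Q\cup\{i,j\})} (1 - y_b) \ge 0$. Then the $t$-dependent part splits into four groups according to whether $i \in R$, $j \in R$. The coefficient of $t^2$ works out to
$$
\sum_{Q} g(Q,\cdot)\,\Big( f(Q \cup \{i\}) - f(Q) - f(Q\cup\{i,j\}) + f(Q\cup\{j\}) \Big)\cdot(-1),
$$
and the bracketed expression is exactly $-\big(f(Q\cup\{i,j\}) + f(Q) - f(Q\cup\{i\}) - f(Q\cup\{j\})\big)$, which is $\ge 0$ by submodularity applied to $A = Q\cup\{i\}$, $B = Q\cup\{j\}$ (so $A\cap B = Q$, $A \cup B = Q\cup\{i,j\}$). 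Hence the coefficient of $t^2$ in $F^y_{ij}(t)$ is $\ge 0$, and a univariate polynomial of degree $\le 2$ with nonnegative leading coefficient is convex.

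The main obstacle, and the only place care is needed, is the bookkeeping in the second step: correctly identifying which of the four cases $(i\in R?\ j\in R?)$ contribute to the $t^2$ coefficient and tracking signs through the product $z_i(1-z_j)$ versus $(1-z_i)z_j$ versus $(1-z_i)(1-z_j)$ versus $z_i z_j$ after substitution. An alternative, cleaner route that avoids heavy algebra: use the probabilistic definition $F(z) = \Exp[f(\hat z)]$ directly. Condition on the coordinates outside $\{i,j\}$; then $F^y_{ij}(t)$ is a convex combination (over those outcomes) of terms of the form $\Exp[f(Q \cup \hat I)]$ where $\hat I \subseteq \{i,j\}$ has $i \in \hat I$ with probability $y_i + t$ and $j \in \hat I$ with probability $y_j - t$ independently. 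Expanding this single expectation over the four possibilities for $\hat I$ gives a quadratic in $t$ whose $t^2$ coefficient is $f(Q\cup\{i\}) + f(Q\cup\{j\}) - f(Q) - f(Q\cup\{i,j\}) \ge 0$ by submodularity; averaging over $Q$ preserves this, and convexity follows. I would present whichever of these is shorter, but both rest on the same two facts: multilinearity forces degree $\le 2$ in $t$, and submodularity controls the sign of the second-order term.
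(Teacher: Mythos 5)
The paper itself does not prove this lemma --- it is imported from \cite{calinescu2011maximizing} --- so there is no internal proof to compare against; your argument is essentially the standard one from that reference. The core of your proposal is correct and your second, probabilistic route is complete as stated: conditioning on the coordinates outside $\{i,j\}$, each conditional term is a quadratic in $t$ whose $t^2$ coefficient is $f(Q\cup\{i\})+f(Q\cup\{j\})-f(Q)-f(Q\cup\{i,j\})\ge 0$ by submodularity, and averaging over $Q$ with nonnegative weights $g(Q)$ preserves this; equivalently, $\tfrac{d^2}{dt^2}F^y_{ij}(t)=-2\,\partial^2 F/\partial y_i\partial y_j\ge 0$. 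A degree-$\le 2$ polynomial with nonnegative $t^2$ coefficient is convex, so the lemma follows.

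Two slips should be fixed before you present the first (algebraic) route. First, the trailing factor $\cdot(-1)$ in your displayed formula is wrong: combined with your (correct) observation that the bracket is nonnegative, it would make the $t^2$ coefficient nonpositive, the opposite of what you need. Expanding $(y_i+t)(y_j-t)$, $(y_i+t)(1-y_j+t)$, $(1-y_i-t)(y_j-t)$, $(1-y_i-t)(1-y_j+t)$ shows that \emph{all four} cases $(i\in R?,\,j\in R?)$ contribute $t^2$ terms, with signs $-1,+1,+1,-1$ respectively, and the net coefficient is exactly the one you state in the probabilistic route, with no extra sign. Second, the parenthetical claim that the leading coefficient ``in fact vanishes, so it is affine'' is false: for $f(S)=\min\{|S|,1\}$ on two elements and $y=(\tfrac12,\tfrac12)$ one gets $F^y_{12}(t)=\tfrac34+t^2$. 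What vanishes for a multilinear $F$ are the pure second partials $\partial^2F/\partial y_i^2$, not the cross term. The aside is harmless, since your proof only uses nonnegativity of the $t^2$ coefficient, but it should be deleted.
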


\noindent
Define $\calB\subseteq 2^{[m]}$ as the family of all committees of size $k$ that satisfy the fairness constraints.
Denote a polytope $B:=\big\{y\in  [0,1]^m $ $ \mid \sum_{i\in [m]}y_i=k; \ell_j\leq \sum_{i\in P_j} y_i\leq u_j, \forall j\in [m]\big\}$ to be the set of all vectors that satisfy the cardinality constraint and all fairness constraints.
Let $F:[0,1]^m\rightarrow \R_{\geq 0}$ denote the multilinear extension of the total score function $\score$. Let $OPT$ be the optimal score of the constrained MS multiwinner voting problem.
%

\subsection{The Case of $\Delta\geq 2$}
\label{sec:degree2}

Theorem~\ref{thm:approxhard} implies that it may be hard to find a committee only violating the fairness constraints by a small amount when $\Delta \geq 3$.
On the other hand, the following theorem shows that a constant approximation solution can be achieved that violates  all fairness constraints by at most a multiplicative factor for feasible instances.
By taking the violation factors into account, the desired fairness can be achieved by setting tighter constraints.

\begin{theorem} (Bi-criterion algorithm when  $\Delta\geq 2$)
\label{thm:general}
Consider a feasible constrained MS multiwinner voting instance with $OPT\gg \Omega(1)$.
Let $L:=\min_{i\in [p]}\ell_i$ and $U:=\min_{i\in [p]}u_i$.
Assume $\nicefrac{2\sqrt{\ln p}}{\sqrt{U}}\leq 1$.
There exists a randomized polynomial-time algorithm that outputs a committee $S$ of size $k$ with score $(1-1/e-o(1))OPT$ with constant probability, and $S$ satisfies the following for all $i \in [p]$:
\begin{equation}
\label{eq:bi-criterion}
\left(1-\nicefrac{2\sqrt{\ln p}}{\sqrt{L}}\right)\ell_i  \leq |S\cap P_i| \leq \left(1+\nicefrac{2\sqrt{\ln p}}{\sqrt{U}}\right)u_i.
\end{equation}
The approximation ratio is $1-o(1)$ for the SNTV rule.
\end{theorem}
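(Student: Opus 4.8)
The plan is to follow the two-stage template advertised in the Techniques paragraph: first compute a good fractional point in the polytope $B$ via the continuous greedy process on the multilinear extension $F$, then round it to an integral committee using a dependent rounding scheme that approximately preserves the group sums $\sum_{i\in P_j} y_i$. For the first stage, I would invoke the standard continuous greedy algorithm of \cite{calinescu2011maximizing}: since $B$ is a (downward-closed after the cardinality equality is handled) solvable polytope and $F$ is the multilinear extension of a monotone submodular function, continuous greedy produces in polynomial time a point $y^\ast \in B$ with $F(y^\ast) \geq (1 - 1/e - o(1))\,OPT$, where $OPT$ is the optimum of the constrained MS problem (the fractional optimum over $B$ dominates the integral one). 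The feasibility assumption guarantees $B \neq \emptyset$, so this step is unconditional.

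For the second stage I would apply the swap randomized rounding procedure of \cite{chekuri2010dependent} to $y^\ast$ relative to the matroid given by the cardinality constraint $\sum_i y_i = k$ (a uniform matroid, so rounding is especially clean). This yields a random committee $S$ of size exactly $k$ with two properties: (i) $\Exp[\mathbf{1}_S] = y^\ast$ coordinatewise, hence by the lower bound property of swap rounding for submodular functions, $\Exp[f(\mathbf{1}_S)] \geq (1 - o(1)) F(y^\ast)$, giving the claimed $(1-1/e-o(1))OPT$ in expectation and then with constant probability after a Markov/averaging argument using $OPT \gg \Omega(1)$; and (ii) the indicator variables $\{\mathbf{1}[c \in S] : c \in P_j\}$ are negatively correlated, so $|S \cap P_j|$ concentrates around its mean $\mu_j := \sum_{i \in P_j} y^\ast_i \in [\ell_j, u_j]$. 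A Chernoff bound then gives, for each $j$, that $|S \cap P_j|$ deviates from $\mu_j$ by at most $O(\sqrt{\mu_j \ln p})$ except with probability $o(1/p)$; a union bound over the $p$ groups handles all constraints simultaneously. Plugging $\mu_j \leq u_j$ on the upper side and $\mu_j \geq \ell_j$ on the lower side, together with $L = \min_j \ell_j$ and $U = \min_j u_j$, converts the additive deviation into the multiplicative form $(1 - 2\sqrt{\ln p}/\sqrt{L})\ell_j \leq |S \cap P_j| \leq (1 + 2\sqrt{\ln p}/\sqrt{U})u_j$ of \eqref{eq:bi-criterion}; the hypothesis $2\sqrt{\ln p}/\sqrt{U} \leq 1$ ensures the deviation parameters are in the regime where the multiplicative Chernoff bound applies with these constants.

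The main obstacle is coupling the two guarantees on the \emph{same} random output $S$: swap rounding gives the submodular value bound in expectation and the group-size concentration with high probability, and I need a single draw that is simultaneously near-optimal in score and near-feasible. The standard fix is to argue that the score is at least $(1-1/e-o(1))OPT$ with constant probability (via a reverse-Markov argument on the bounded random variable $f(\mathbf{1}_S) \leq OPT$, using $OPT \gg \Omega(1)$ so the $o(1)$ slack absorbs the loss), while all $p$ fairness deviations hold except with probability $o(1)$; intersecting these two events still leaves constant probability. A secondary point requiring care is that swap rounding as stated is for a single matroid, whereas $B$ also carries the $p$ group inequalities — but these are \emph{not} enforced as matroid constraints here; they are only tracked through the negative-correlation / concentration analysis, which is exactly why the output is bi-criterion rather than exactly feasible.

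Finally, for the SNTV improvement I would note that the SNTV score function $\score(S) = \sum_{i} \gamma_{m,k}(\pos_{\succ_i}(S))$ with $\gamma_m(1)=1$ is a coverage function: $\score(S) = |\{i \in [n] : \text{top}(\succ_i) \in S\}|$, i.e.\ a weighted set-cover objective. Its multilinear extension and the continuous-greedy step can then be analyzed more tightly — for coverage functions the rounding loss can be made $1 - o(1)$ rather than $1 - 1/e - o(1)$, since a linear (over the ground set of voter-types) reformulation lets one round the fractional solution essentially without value loss, and the $1/e$ factor (which comes from the generic submodular continuous greedy) is avoided. Thus the same algorithm, specialized to SNTV, yields approximation ratio $1 - o(1)$ while retaining the bi-criterion fairness guarantee \eqref{eq:bi-criterion} unchanged.
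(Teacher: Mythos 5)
Your overall plan is the same as the paper's: continuous greedy on the multilinear extension over the polytope $B$, followed by the swap randomized rounding of Chekuri et al., with the group constraints tracked only through negative-correlation/Chernoff concentration plus a union bound over the $p$ groups, exactly as in the paper's proof (and your choices $\delta_1=2\sqrt{\ln p}/\sqrt{L}$, $\delta_2=2\sqrt{\ln p}/\sqrt{U}$ reproduce the paper's calculation). However, there is a genuine gap in the step where you couple the score guarantee with the fairness events. Your ``reverse-Markov'' argument rests on the almost-sure bound $f(\mathbf{1}_S)\leq OPT$, which is false in general: the rounded committee $S$ has size $k$ but need not satisfy the fairness constraints, so its score can exceed the \emph{constrained} optimum $OPT$; the only valid almost-sure bound is the unconstrained size-$k$ optimum, which may be arbitrarily larger than $OPT$ (this is precisely the situation the fairness constraints are meant to address), and with such a bound reverse Markov yields no constant-probability guarantee. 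The paper avoids this by invoking the lower-tail concentration of submodular values under swap rounding, namely $\Prob\left[\score(S)\leq (1-\theta)F(y)\right]\leq e^{-F(y)\theta^2/8}$ (Theorem 2.2 of Chekuri et al.), which together with $F(y)\geq(1-1/e)OPT\gg \Omega(1)$ gives the score bound with probability $1-o(1)$, so a union bound with the fairness events (which hold with probability at least $1-2/\sqrt{p}$) finishes the proof. Your argument should be repaired by substituting this concentration bound for the reverse-Markov step; everything else in your outline then goes through.

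Two smaller remarks. First, swap rounding preserves the multilinear value in expectation with no loss, $\Exp[\score(S)]\geq F(y)$, so the $(1-o(1))$ factor you insert there is unnecessary (though harmless). Second, for SNTV the correct reason the ratio improves to $1-o(1)$ is that the score is \emph{linear} (each voter is ``covered'' by exactly one candidate, her top choice), so the fractional optimization can be done exactly and the $1-1/e$ loss of generic continuous greedy disappears; for a general coverage function this would not be possible (maximum coverage is $(1-1/e)$-hard), so your phrasing in terms of coverage should be tightened to linearity, which is what the paper uses.
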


\noindent
Before proving this theorem, we discuss the assumption and the consequences of Theorem~\ref{thm:general}.
Firstly, the assumption that $OPT\gg \Omega(1)$ is reasonable for many voting rules, such as the CC rule, the OWA-based rule, and the Monore rule.
The reason is that if \emph{enough} (say at least $n/10$) voters have at least one representation in the optimal committee, then the total score of these rules is usually at least $n/10$.

Under reasonable assumptions, the violation in fairness constraints in the above theorem can be seen to be small.
First,  assume that no group is  too small:
$|P_i|\geq 0.15m \quad \forall i\in [m].$
Groups corresponding to gender, ethnicity and political opinions are often large.
Combining this with the proportional representative condition ({see Equations \eqref{eq:klarge} and \eqref{eq:proportion} in Section~\ref{sec:feasibilityconditions}}):
$$k\gg 100\ln p ,$$ and $$\ell_i \approx k\cdot\left(\frac{|P_i|}{m}-0.05\right),~ u_i\geq k\cdot \left(\frac{|P_i|}{m}+0.05 \right)$$
for all $i\in [p]$,
we observe that $\nicefrac{2\sqrt{\ln p}}{\sqrt{L}}\ll 0.66$ is a small number.
Thus, the violation of  the group-fairness condition by Theorem~\ref{thm:general} is small.
{We expect such algorithmic solutions to be deployed for the development of automated systems, such as movie selection on the airplane and news recommendation for websites, for which the above assumptions are natural.}

\begin{proof} [of Theorem~\ref{thm:general}]
Let $\varepsilon>0$ be any given constant.
Recall that $B:=\big\{y\in  [0,1]^m $ $ \mid \sum_{i\in [m]}y_i=k; \ell_j\leq \sum_{i\in P_j} y_i\leq u_j, \forall j\in [m]\big\}$ is the set of all vectors that satisfy the cardinality constraint and all fairness constraints.
We first obtain a fractional solution $y\in B$ by the continuous greedy algorithm in~\cite[Section 3.1]{calinescu2011maximizing}.
It follows that
$F(y)\geq (1-1/e)OPT\gg \Omega(1)$ by~\cite[Appendix A]{calinescu2011maximizing}.
Next, we run the randomized swap rounding algorithm in~\cite{chekuri2010dependent}
and obtain a committee $S$ of size $k$.
By~\cite[Theorem 2.1]{chekuri2010dependent}, we have
$\Exp\left[\score(S)\right]\geq F(y)\geq (1-1/e)OPT,$
and for any $i\in [p]$ and any $\delta_1, \delta_2>0$,
\begin{eqnarray}
\label{eq:probability}
\begin{split}
& \Prob\left[|S\cap P_i|\leq (1-\delta_1)\ell_i\right]\leq e^{-\ell_i\delta_1^2/2}, \\
&\Prob\left[|S\cap P_i|\geq (1+\delta_2)u_i\right]\leq \left(\nicefrac{e^{\delta_2}}{(1+\delta_2)^{1+\delta_2}}\right)^{u_i}.
\end{split}
\end{eqnarray}
If $\delta_2\leq 1$, we have the following inequality.
\begin{eqnarray}
\begin{split}
&\Prob\left[|S\cap P_i|\geq (1+\delta_2)u_i\right] & \\
\leq& \left(\nicefrac{e^{\delta_2}}{(1+\delta_2)^{1+\delta_2}}\right)^{u_i}& \text{(Ineq. \eqref{eq:probability})} \\
\leq& \left(\nicefrac{e^{\delta_2}}{e^{\delta_2+0.38\delta^2}}\right)^{u_i} & \text{(Claim~\ref{cl:0.38})}\\
=& e^{-0.38 u_i \delta_2^2} &.
\end{split}
\label{eq:upper}
\end{eqnarray}
Here, the second inequality is from the following claim.

\begin{claim}
\label{cl:0.38}
For any $x\in [0,1]$, $x+0.38x^2\leq \left(1+x\right) \ln \left(1+x\right)$.
\end{claim}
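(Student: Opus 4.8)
The plan is to prove the elementary inequality $x + 0.38x^2 \le (1+x)\ln(1+x)$ for all $x \in [0,1]$ by reducing it to a single-variable calculus exercise. First I would define $g(x) := (1+x)\ln(1+x) - x - 0.38x^2$ and observe that $g(0) = 0$, so it suffices to show $g(x) \ge 0$ on $[0,1]$. Differentiating gives $g'(x) = \ln(1+x) + 1 - 1 - 0.76x = \ln(1+x) - 0.76x$, and again $g'(0) = 0$.

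Next I would analyze $g'$. We have $g''(x) = \frac{1}{1+x} - 0.76$, which is positive for $x < \frac{1}{0.76} - 1 = \frac{0.24}{0.76} = \frac{6}{19} \approx 0.3158$ and negative thereafter. Hence $g'$ increases on $[0, 6/19]$ and decreases on $[6/19, 1]$, so on $[0,1]$ the function $g'$ attains its minimum at one of the endpoints. Since $g'(0) = 0$ and $g'(1) = \ln 2 - 0.76 \approx 0.6931 - 0.76 = -0.0669 < 0$, the minimum is at $x = 1$; but this only shows $g'$ can be negative near $x = 1$, so I need the finer statement that $g$ itself stays nonnegative. Because $g'(0) = 0$, $g'$ rises to a positive maximum, then falls and becomes negative only for $x$ close to $1$, $g$ is increasing up to some point $x_0 \in (6/19, 1)$ and decreasing on $[x_0, 1]$; therefore the minimum of $g$ on $[0,1]$ is at an endpoint, i.e.\ $\min\{g(0), g(1)\}$. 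Finally $g(0) = 0$ and $g(1) = 2\ln 2 - 1 - 0.38 = 2\ln 2 - 1.38 \approx 1.3863 - 1.38 = 0.0063 > 0$, so $g \ge 0$ on $[0,1]$, which is exactly the claim.

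The only mild obstacle is making the endpoint-minimum argument for $g$ fully rigorous: one must note that a $C^2$ function whose derivative is zero at the left endpoint, then positive, then (after a single sign change) negative, is unimodal with a single interior maximum, hence its minimum over the closed interval occurs at an endpoint. This follows because $g' > 0$ on $(0, x_0)$ and $g' < 0$ on $(x_0, 1)$ where $x_0$ is the unique root of $g'$ in $(6/19, 1)$ (unique since $g'$ is strictly monotone decreasing there and changes sign). Alternatively, and perhaps more cleanly for the writeup, one can simply verify $g(1) > 0$, note $g(0) = 0$, and invoke the unimodality of $g$ on $[0,1]$ to conclude; I would present whichever is shorter. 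The constant $0.38$ is chosen precisely so that $g(1) = 2\ln 2 - 1.38 > 0$ remains true with a comfortable margin, so there is no knife-edge to worry about.
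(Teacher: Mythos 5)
Your proposal is correct and follows essentially the same route as the paper's proof: the same auxiliary function $f(x)=(1+x)\ln(1+x)-x-0.38x^2$, the same second-derivative analysis showing $f'$ increases on $[0,6/19]$ and decreases on $[6/19,1]$ with $f'(0)=0$ and $f'(1)=\ln 2-0.76<0$, hence a single interior sign change of $f'$, unimodality of $f$, and the endpoint checks $f(0)=0$, $f(1)=2\ln 2-1.38>0$. No substantive differences to report.
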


\begin{proof}
Let $f(x)=\left(1+x\right) \ln \left(1+x\right)-x-0.38x^2$.
It is equivalent to prove that $f(x)\geq 0$ for all $x\in [0,1]$.
We have
$$f'(x)=\ln \left(1+x\right)-0.76x,$$
and
$$f''(x)=\frac{1}{1+x}-0.76.$$
Observe that $f''(x)> 0$ when $x\in (0,6/19)$ and $f''(x)<0$ when $x>6/19$.
Hence we conclude that $f'(x)$ is monotone increasing in $[0,6/19]$ and is monotone decreasing in $[6/19,1]$.
Also note that $f'(0)=0$ and $f'(1)=\ln 2-0.76<0$.
Then there exists only one $\theta\in (0,1)$ such that $f'(\theta)=0$.
Therefore, $f(x)$ is monotone increasing in $[0,\theta]$ and is monotone decreasing in $[\theta,1]$.
Note that $f(0)=0$ and $f(1)=2\ln 2-1.38>0$.
We conclude that $f(x)\geq 0$ for all $x\in [0,1]$.
\end{proof}

\noindent
Let $\delta_1=\nicefrac{2\sqrt{\ln p}}{\sqrt{L}}$ and $\delta_2=\nicefrac{2\sqrt{\ln p}}{\sqrt{U}}$.
By the assumption of the theorem, we have $\delta_2\leq 1$.
By the union bound, we have the following inequality
\begin{eqnarray*}
& &\Prob \big[ \forall i\in [p], \left(1-\nicefrac{2\sqrt{\ln p}}{\sqrt{L}}\right)\ell_i  \leq |S\cap P_i|\leq  \\
& &\left(1+\nicefrac{2\sqrt{\ln p}}{\sqrt{U}}\right)u_i \big] \\
&\geq & 1-\sum_{i\in [p]}  \Prob\left[|S\cap P_i|\leq \left(1-\nicefrac{2\sqrt{\ln p}}{\sqrt{L}}\right)\ell_i\right] \\
& &-\sum_{i\in [p]}  \Prob\left[|S\cap P_i|\geq \left(1+\nicefrac{2\sqrt{\ln p}}{\sqrt{U}}\right)u_i\right]  \\
& &\text{(union bound)} \\
&\geq & 1-\sum_{i\in [p]} e^{-\ell_i \left( \nicefrac{2\sqrt{\ln p}}{\sqrt{L}}\right)^2/2}-\sum_{i\in [p]} e^{-0.38 u_i  \left( \nicefrac{2\sqrt{\ln p}}{\sqrt{U}}\right)^2}  \\
&  &\text{(Inequalities~\eqref{eq:probability} and~\eqref{eq:upper})}\\
&\geq & 1-\sum_{i\in [p]} e^{-2\ell_i \ln p/L}-\sum_{i\in [p]} e^{-1.5 u_i \ln p/U} \\
&\geq & 1-\sum_{i\in [p]} e^{-2\ln p}-\sum_{i\in [p]} e^{-1.5 \ln p}  \\
& & \text{(Definitions of $L, U$)}\\
&=& 1-p \cdot \frac{1}{p^2}-p \cdot \frac{1}{p^{1.5}} \geq 1-2/\sqrt{p}.
\end{eqnarray*}
which shows that $S$ satisfies Equation \eqref{eq:bi-criterion} with probability at least $1-2/\sqrt{p}$.
On the other hand, we have for any $\delta>0$,
$
\Prob\left[\score(S)\leq (1-\theta)F(y)\right]\leq e^{-F(y)\theta^2/8}
$
by~\cite[Theorem 2.2]{chekuri2010dependent}.
Let $\theta=0.01$.
Since $F(y)\gg \Omega(1)$, we have $\score(S)\geq (1-1/e-0.01)OPT$ with probability $1-o(1)$. Combining all of the above, along with a union bound, we conclude the proof of the theorem.

Due to the fact that the total score function in the SNTV rule is linear, we obtain a fractional solution with $F(y)\geq OPT$ in the first stage.
Then by the same argument, the resulting committee $S$ satisfies that $\score(S)\geq (1-o(1))OPT$.
\end{proof}

\noindent
We now present our algorithmic results when there are only one-sided (upper or lower) fairness constraints.
\noindent
We first consider the case that all $\ell_i=0$, i.e., each group is only required not to be over-represented.
In this case, we call an algorithm $(\gamma,\theta)$-approximation ($\gamma \in [0,1], \theta \geq 1$) if the algorithm outputs a $\gamma$-approximate solution $S$ which violates all constraints by a factor of at most $\theta$, i.e., $|S\cap P_i|\leq \theta \cdot u_i$ holds for all $i\in [p]$.
The assumption that we have a feasible solution $\hat{S}$ is often satisfied and does not add computational overhead under natural conditions on the data; see Section~\ref{sec:dis} for examples. 

\begin{theorem} (Bi-criterion algorithm when $\ell_i=0$)
\label{thm:onlyupper}
Consider the constrained MS multiwinner voting problem satisfying $\ell_i=0$ for all $i\in [p]$. Suppose we have a feasible solution $\hat{S}$ in advance. For any constant $\varepsilon>0$, the following claims hold
\begin{enumerate}
\item There exists a $(\nicefrac{1}{\Delta+1},2)$-approximation algorithm that runs in $O(mk/\Delta)$ time.
\item For any constant $0<\varepsilon<1$, suppose that $u_i\geq 6\log p/\varepsilon^2$ and $k\geq 6\log p/\varepsilon^2$. There exists a polynomial-time $(1-1/e-O(\varepsilon),2)$-approximation algorithm.
\end{enumerate}
\end{theorem}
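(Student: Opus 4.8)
The two claims call for different machinery, so the plan is to prove them separately; in both cases the factor-$2$ slack allowed in the upper bounds is exactly what lets us convert an ``essentially feasible'' high-score set into one of size exactly $k$.

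\emph{Claim (1).} The first step is to record that, since every candidate belongs to at most $\Delta$ groups, the independence system $\calI:=\{S\subseteq C:\ |S\cap P_i|\le u_i\ \text{for all }i\in[p]\}$ is $\Delta$-extendible: if $S\subseteq T$ are both in $\calI$ and $S+e\in\calI$, then for each group $P_i\ni e$ that is saturated in $T$ we have $(T\cap P_i)\setminus S\neq\emptyset$ (because $|S\cap P_i|<u_i=|T\cap P_i|$), so deleting one such element from each of the at most $\Delta$ saturated groups removes $\le\Delta$ elements of $T\setminus S$ and restores feasibility of $T+e$. I would then run the classical monotone--submodular greedy on $\calI$, halting as soon as the current set reaches size $k$ (or stalls for lack of a feasible addition). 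The standard greedy analysis for $\Delta$-extendible systems gives a set $S_g$ with $|S_g|\le k$ and $\score(S_g)\ge\frac{1}{\Delta+1}\max\{\score(T):T\in\calI,\ |T|\le k\}\ge\frac{1}{\Delta+1}\opt$, the last inequality being monotonicity together with the fact that every feasible size-$k$ committee lies in $\calI$. If $|S_g|=k$ we output it (it even satisfies the original constraints); if $|S_g|<k$, I would append $k-|S_g|$ elements of $\hat S\setminus S_g$ — at least $k-|S_g|$ exist since $|\hat S\cap S_g|\le|S_g|$ — to obtain $S$ of size $k$ with $\score(S)\ge\score(S_g)$ and $|S\cap P_i|\le|S_g\cap P_i|+|\hat S\cap P_i|\le 2u_i$ for all $i$. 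The $O(mk/\Delta)$ running time follows from implementing the greedy so that a candidate is permanently dropped from future scans once one of its (at most $\Delta$) groups saturates.

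\emph{Claim (2).} Here I would pass to the multilinear relaxation. Running the continuous greedy of \cite{calinescu2011maximizing} (exactly as in the proof of Theorem~\ref{thm:general}) over $B:=\{y\in[0,1]^m:\ \sum_i y_i=k,\ \sum_{i\in P_j}y_i\le u_j\ \forall j\}$ — which is nonempty because the indicator of $\hat S$ lies in it — yields a fractional $y\in B$ with $F(y)\ge(1-1/e)\opt$. Applying randomized swap rounding \cite{chekuri2010dependent} with respect to the base constraint $\sum_i y_i=k$ gives an integral $S$ with $|S|=k$ exactly and $\Exp[\score(S)]\ge F(y)$. The hypotheses $u_i\ge 6\log p/\eps^2$ and $k\ge 6\log p/\eps^2$ are precisely what make the Chernoff-type tail bounds for linear functions under swap rounding give $|S\cap P_i|\le(1+\eps)u_i\le 2u_i$ for each fixed $i$ with failure probability $o(1/p)$, so a union bound over the $p$ group constraints leaves a small total failure probability; simultaneously, the lower-tail concentration of the multilinear objective under swap rounding upgrades $\Exp[\score(S)]\ge(1-1/e)\opt$ to $\score(S)\ge(1-1/e-O(\eps))\opt$ with high probability, using (as in Theorem~\ref{thm:general}) that $\opt=\Omega(1)$ for the voting rules of interest. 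A final union bound combines the score event and the $p$ feasibility events.

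The step I expect to be most delicate is the greedy analysis in Claim (1): one must verify that truncating the greedy at size $k$ still yields the ratio $\frac{1}{\Delta+1}$ against the best size-$\le k$ member of $\calI$ — i.e., that the extra cardinality restriction does not cost another $+1$ in the denominator — by carrying the cardinality bound through the extendibility charging argument, and also that in the stalled case the padding from $\hat S$ always suffices. In Claim (2) the only real bookkeeping is aligning the two concentration inequalities with the stated lower bounds on $u_i$ and $k$ so that everything survives a single union bound; the mild ``$\opt=\Omega(1)$'' assumption invoked for the score tail is the same one already used in Theorem~\ref{thm:general}.
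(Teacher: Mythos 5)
Your overall route coincides with the paper's: for Claim (1) the paper likewise proves that the family of committees of size at most $k$ satisfying the upper bounds is a $\Delta$-extendible system (your ``truncate the greedy at $k$'' argument is the same fact in disguise --- downward-closedness absorbs the cardinality bound, which is exactly the paper's ``if $Y=\emptyset$, delete an arbitrary element of $T\setminus S$'' step) and then pads with $\hat S$ to get the factor-$2$ violation; for Claim (2) the paper also goes through the multilinear relaxation, continuous greedy, and randomized swap rounding, with the same Chernoff-plus-union-bound role for $u_i\ge 6\log p/\eps^2$. However, two concrete gaps remain relative to the statement. First, in Claim (1) the plain greedy you describe costs $\Theta(mk)$ value queries, and discarding candidates once one of their groups saturates does not give $O(mk/\Delta)$ in the worst case (groups need never saturate). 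The paper obtains the stated running time by invoking the sample-greedy algorithm for $\Delta$-extendible systems of \cite{feldman2017greed}, which attains the $\nicefrac{1}{\Delta+1}$ guarantee with $O(m+mk/\Delta)$ queries; your argument establishes the approximation factor but not the claimed time bound.

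Second, in Claim (2) you convert $\Exp[\score(S)]\ge F(y)$ into a pointwise guarantee $\score(S)\ge(1-1/e-O(\eps))\opt$ via the lower-tail bound used in Theorem~\ref{thm:general}, whose failure probability is $e^{-\Theta(F(y))}$; this is only small when $F(y)$ grows, and ``$\opt=\Omega(1)$'' is not sufficient. More importantly, Theorem~\ref{thm:onlyupper} does not carry the $OPT\gg\Omega(1)$ hypothesis of Theorem~\ref{thm:general}, so you are importing an assumption that is not available here (and since $S$ may violate constraints, $\score(S)$ is not bounded by $\opt$, so a simple averaging/Markov fix is not immediate). The paper sidesteps this by optimizing over the shrunk family $(1-\eps)\calB'$ and citing the matroid-plus-packing result of \cite{chekuri2010dependent} (their Theorem 5.2), which simultaneously yields membership in $\calB'$ and the $(1-1/e-O(\eps))$ score guarantee; this is also where the hypothesis $k\ge 6\log p/\eps^2$ is used (cardinality is treated as another packing constraint), a hypothesis your variant silently drops by rounding in the base polytope. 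Aside from these two points, your approximation and violation analysis matches the paper's.
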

\noindent
\begin{proof}
Define $\calB'$ to be the collection of committees that have  at most $k$ candidates and satisfy all fairness constraints.
For the first claim, we first prove that $(C,\calB')$ is a $\Delta$-extendible system
\footnote{A pair $(\calN,\calI)$, where $\calN$ is a finite set and $\calI$ is a collection of subsets of $\calN$, is called a $\Delta$-extendible system if 1) (downclosed) $S\subseteq T\subseteq \calN$ and $T\in \calI$ imply $S\in \calI$; 2) ($\Delta$-extendible) For any $S,T\in \calI$ and any element $e\in \calN\setminus S$, if $S\subseteq T$ and $S\cup \{e\}\in \calI$, then there must exist a subset $Y\subseteq T\setminus S$ with $|Y|\leq \Delta$ such that $T\cup\{e\}\setminus Y \in \calI$.}.
It is not hard to see the downclosed property holds.
For the $\Delta$-extendible property, the proof is as follows.
For any $S,T\in \calB'$ and any candidate $c\in C$, let $P(c)$ denote the collection of types that $c$ has.
We have $|P(c)|\leq \Delta$.
Observe that $T'=T\cup \left\{e\right\}$ has at most $k+1$ candidates and $|T' \cap P_i|\leq 1+u_i$ holds for all $i\in P(c)$.
Then if $S\subseteq T$ and $S'=S\cup \left\{c\right\} \in \calB'$, we have $|S' \cap P_i|\leq u_i$ holds for all $i\in P(c)$.
It implies that if $|T' \cap P_i|= 1+u_i$ for some $i\in P(c)$, then $\left(T\setminus S\right)\cap P_i\neq \emptyset$.
Let $c_i\in \left(T\setminus S\right)\cap P_i$ be an arbitrary candidate for any $i\in P(c)$ satisfying that $|T' \cap P_i|= 1+u_i$.
Define $Y$ to be the collection of all such candidates $c_i$.
If $Y=\emptyset$, then let $Y=\left\{c'\right\}$ for an arbitrary candidate $c'\in T\setminus S$.
By the construction of $Y$, we conclude that $1\leq |Y|\leq \Delta$.
Hence $|T\cup\{c\}\setminus Y|\leq k$.
Moreover, for any $i\in P(c)$, we have $\left|\left(T\cup\{c\}\setminus Y\right)\cap P_i\right|\leq 1+u_i-1=u_i$, which implies that $T\cup\{c\}\setminus Y\in \calB'$.
Thus, we prove that $(C,\calB')$ is a $\Delta$-extendible system.

Since $\score$ is monotone submodular, we reduce the problem of finding $\arg\max_{S\in \calB'}\score(S)$ to the monotone submodular maximization problem with $\Delta$-extendible system, which has a $(\nicefrac{1}{\Delta+1})$-approximation algorithm in $O(mk/\Delta)$ time by~\cite[Theorem 1]{feldman2017greed}.
Therefore, we can compute a committee $S_1\in \calB'$ with $\score(S_1)\geq (\nicefrac{1}{\Delta+1}) OPT$ in $O(mk/\Delta)$ time by~\cite[Theorem 1]{feldman2017greed}. Since $S_1\in \calB'$, we have that $|S_1|\leq k$ and $S_1$ satisfies all fairness constraints.
By the condition of the theorem, there exists a feasible solution $\hat{S}$ in advance, i.e., $|\hat{S}|=k$ and $\hat{S}$ satisfies all fairness constraints. Then from $\hat{S}\setminus S_1$, we arbitrarily select a set $S_2\subseteq \hat{S}$ of $k-|S_1|$ candidates. Since both $S_1$ and $S_2$ satisfy all fairness constraints, we have for all $i\in [p]$,
\[
\left| \left(S_1\cup S_2\right) \cap P_i \right|\leq \left| S_1  \cap P_i \right|+\left| S_2  \cap P_i \right|\leq u_i+u_i=2u_i.
\]
By monotonicity, we have $\score\left(S_1\cup S_2\right)\geq \score(S_1)\geq (\nicefrac{1}{\Delta+1}) OPT$. Therefore, we conclude that the committee $S_1\cup S_2$ is a $(\nicefrac{1}{\Delta+1},2)$-approximation solution.

For the second claim,
we compute a fractional solution constrained to $(1-\varepsilon)\calB'$ defined as follows: $(1-\varepsilon)\calB':=\left\{ S\subseteq C: |S|\leq (1-\varepsilon)k; |S\cap P_i|\leq (1-\varepsilon)u_i, \forall i\in [p] \right\}$. Then we round it to a committee $S_1$ using the swap randomized rounding procedure in~\cite{chekuri2010dependent}.
By~\cite[Theorem 5.2]{chekuri2010dependent}, $S_1$ can be guaranteed to be in $\calB'$,
and $\score(S_1)\geq (1-1/e-O(\varepsilon)) OPT$.
Then by the same argument as in the first claim, we select $k-|S_1|$ more candidates from $\hat{S}\setminus S_1$ and obtain a $(1-1/e-O(\varepsilon),2)$-approximation solution.
\end{proof}

\noindent
We now present our result for the case when we only have lower bound constraints, i.e., the upper bound constraints $u_i = |P_i|$ for all $i\in [p]$ (observe that $|S\cap P_i|\leq u_i$ always holds in this case).
Further, we assume that a committee of size $o(k/\ln \Delta)$ satisfying all fairness constraints exists,
\noindent
which is reasonable since, in practice, $\Delta$ roughly represents the number of attributes, like gender and ethnicity.
Though there may exist many groups, the number of attributes is usually limited.

\begin{theorem} (Approximation algorithm when $u_i= |P_i|$)
\label{thm:onlylower}
Given a constrained multiwinner voting instance satisfying $u_i= |P_i|$ for all $i\in [p]$ with a promise that a committee of size $o(k/\ln \Delta)$ satisfying all fairness constraints exists, there exists a $(1-1/e-o(1))$-approximation algorithm.
\end{theorem}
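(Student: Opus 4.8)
The plan is to reduce the constrained multiwinner voting problem with only lower-bound fairness constraints to monotone submodular maximization over a single matroid (or a suitable exchange system), combining a continuous greedy step on the multilinear extension with a dependent rounding step, much as in Theorem~\ref{thm:general}. The key difference is that with only lower bounds (the upper bounds being vacuous since $u_i=|P_i|$), feasibility is a covering-type condition, and the lower-bound polytope $B=\{y\in[0,1]^m : \sum_i y_i = k;\ \sum_{i\in P_j} y_i \geq \ell_j,\ \forall j\}$ is nonempty by the promise. Since we are promised a committee of size $o(k/\ln\Delta)$ satisfying all lower bounds, we can ``reserve'' that small committee $\hat S$ to handle feasibility and spend the remaining $k - |\hat S| = (1-o(1))k$ budget greedily.

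First I would split the budget: let $\hat S$ be the promised feasible committee of size $s = o(k/\ln\Delta)$, so $\hat S$ already satisfies every lower bound $\ell_j$. We then want to select the remaining $k - s$ candidates to maximize $\score$, subject only to the cardinality constraint $|S| = k$ --- there are no more fairness constraints to worry about, because $\hat S$ alone certifies $|S\cap P_j|\geq \ell_j$. So the residual problem is: maximize the monotone submodular function $\score(\hat S\cup\cdot)$ over subsets of $C\setminus \hat S$ of size at most $k-s$ --- a cardinality-constrained monotone submodular maximization, for which the classical greedy gives a $(1-1/e)$-approximation~\cite{nemhauser1978analysis}. The only subtlety is that the optimal committee $S^\ast$ of the \emph{original} problem has size exactly $k$ and need not contain $\hat S$, so I need to argue $\score(\hat S \cup T) $ for the best size-$(k-s)$ set $T$ is at least $(1-1/e-o(1))\,OPT$. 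This follows because $OPT = \score(S^\ast) \le \score(S^\ast \cup \hat S)$ by monotonicity, and $S^\ast\cup\hat S$ has size at most $k+s = (1+o(1))k$; a standard argument (running greedy for $k-s$ steps against a target set of size $k+s$, i.e. slightly ``under-budgeted'' greedy) yields value at least $(1 - e^{-(k-s)/(k+s)})OPT = (1-1/e-o(1))OPT$, using $s = o(k)$. Equivalently one can phrase this via the continuous greedy on the multilinear extension $F$ restricted to the polytope $\{y : \sum y_i \le k-s,\ \mathrm{supp}(y)\subseteq C\setminus\hat S\}$ and then dependent-round to an integral set of size $k-s$ in this matroid, losing nothing in the rounding since it is a simple uniform matroid.

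The main obstacle --- and the reason the $o(k/\ln\Delta)$ bound appears rather than just $o(k)$ --- is presumably that the promised small feasible committee $\hat S$ is not handed to us explicitly but must itself be \emph{found}; finding \emph{any} feasible committee is exactly the covering/set-multicover feasibility problem shown NP-hard in general in Theorems~\ref{thm:lowerhard} and~\ref{thm:approxhard}, and hard to approximate within $(1-\varepsilon)\ln p$ factors. So the real content of the theorem is: if a feasible committee of size $o(k/\ln\Delta)$ \emph{exists}, one can \emph{efficiently construct} a feasible committee of size $(1-o(1))k$ --- i.e., pay only an $O(\ln\Delta)$ (or $O(\ln p)$, controlled by $\Delta$) blow-up via a greedy set-cover-style routine for the multi-cover constraints $|S\cap P_j|\geq \ell_j$, which with $\Delta$-bounded element multiplicity and target size $o(k/\ln\Delta)$ produces a cover of size $o(k)$. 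I would therefore run the greedy set-multicover heuristic on the instance $(\{P_j\}, \{\ell_j\})$ to get such a feasible $\hat S'$ of size $o(k)$, then apply the budget-splitting and greedy/continuous-greedy argument above with $\hat S'$ in place of $\hat S$. The remaining steps --- bounding the greedy set-cover size, verifying $s=o(k)$ suffices for the $(1-1/e-o(1))$ guarantee, and the (trivial, uniform-matroid) rounding --- are routine; the crux is correctly invoking the $\ln\Delta$-style approximation for set-multicover to convert the existential promise into an algorithmic one.
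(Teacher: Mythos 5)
Your proposal is correct and follows essentially the same route as the paper: convert the existential promise into an explicit small feasible set via the $O(\ln\Delta)$-approximation for constrained set multi-cover (giving a feasible $S_1$ of size $o(k)$), then spend the remaining $(1-o(1))k$ budget on cardinality-constrained monotone submodular maximization and fill up to size $k$ by monotonicity. The only cosmetic difference is how the $o(1)$ budget loss is quantified --- the paper proves an inductive claim that a size-$(k-o(k))$ subset of the optimal committee already attains $(1-o(1))\,OPT$, while you invoke the standard under-budgeted greedy bound $\bigl(1-e^{-(k-s)/k}\bigr)OPT$ on the conditioned function $\score(\hat S\cup\cdot)$; both yield the stated $(1-1/e-o(1))$ guarantee.
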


\begin{proof}
Define $\calB'=\left\{S\subseteq C \mid |S\cap P_i|\geq \ell_i,~ \forall i\in [p] \right\}$ to be the collection of committees satisfying all fairness constraints.
Consider the  problem $\min_{S\in \calB'} |S|$ and its reduction to the minimum constrained set multi-cover problem.
For each group $P_i$, we construct a corresponding element $v_i$.
For each candidate $c_i\in C$, we construct a set $S_i=\{v_j:c_i\in P_j\}$.
Hence there are $p$ elements and $m$ sets, and each set has at most $\Delta$ elements.
The goal is to select the smallest number of sets that cover each element $v_i$ at least $\ell_i$ times.
By~\cite{berman2007randomized}, there exists a polynomial-time $(1+\ln \Delta)$-approximation algorithm for the minimum constrained set multi-cover problem.
Therefore, we can compute a committee $S_1$ of size $o(k)$ satisfying all fairness constraints since a committee of size $o(k/\ln \Delta)$ satisfying all fairness constraints exists.
The next step, we construct an unconstrained multiwinner voting problem:
Find a committee of size at most $(k-|S_1|)$ which maximizes the total score.
Suppose the optimal total score of this problem is $OPT'$. Since $|S_1|=o(k)$ and $\score$ is a monotone submodular function, we can prove that $OPT'\geq (1-o(1))OPT$ by reduction.
W.l.o.g., let $O=\left\{c_1,\ldots,c_k \right\}$ be the optimal committee in $\calB$, i.e., $\score(O)=OPT$.
Assume the optimal total score of a committee $S\subseteq O$ of size at most $i$ is $OPT_i$.
It suffices to prove that $OPT_i\geq \frac{i}{k}OPT$.
If $i=1$, since $\score$ is a nonnegeative monotone submodular function, we have
\begin{eqnarray}
\label{eq:monotone}
\begin{split}
& \sum_{j\in [k]}\score(\left\{c_j\right\}) & \\
\geq & \sum_{j\in [k-2]}\score(\left\{c_j\right\})+\score(\left\{c_{k-1},c_k\right\})+\score(\emptyset) & (\text{$\score$ is monotone submodular})\\
\geq & \sum_{j\in [k-3]}\score(\left\{c_j\right\})+\score(\left\{c_{k-2},c_{k-1},c_k\right\})+\score(\emptyset) +\score(\emptyset) & \\
\geq &\ldots &\\
\geq & \score(O)+(k-1)\score(\emptyset) &\\
\geq & OPT.&
\end{split}
\end{eqnarray}
Then there must exist $j\in [k]$ such that $\score(c_j)\geq OPT/k$.
By the definition of $OPT_1$, we have $OPT_1\geq \score(c_j)\geq OPT/k$.
If $OPT_i\geq \frac{i}{k}OPT$ holds, consider the case of $i+1$.
By the definition of $OPT_i$, there exists a committee $S\subseteq O$ of size at most $i$ such that $\score(S)=OPT_i$.
By a similar argument as in Equation \eqref{eq:monotone}, we have
\[
 \sum_{j\in O\setminus S}\left(\score(S\cup\left\{c_j\right\})-\score(S) \right) \geq OPT-\score(S).
\]
Hence there exists a candidate $c_j\in O\setminus S$ such that
$$\score(S\cup\left\{c_j\right\})-\score(S)\geq \frac{OPT-\score(S)}{k-i}.$$
Then we conclude that
\begin{eqnarray*}
& \score(S\cup\left\{c_j\right\})& \\
\geq & \score(S)+\frac{OPT-\score(S)}{k-i}& \\
= & \frac{OPT}{k-i}+\left(1-\frac{1}{k-i} \right) \score(S)& \\
\geq & \frac{OPT}{k-i}+\left(1-\frac{1}{k-i} \right) \frac{i}{k}OPT & (\text{Definition of $S$})\\
= & \frac{i+1}{k} OPT.&
\end{eqnarray*}
Thus, $OPT_i\geq  \score(S\cup\left\{c_j\right\})\geq \frac{i+1}{k} OPT$ which completes the proof.

Then using the greedy algorithm in~\cite{calinescu2011maximizing}, we obtain a committee $S_2$ of size at most $(k-|S_1|)$ with
$\score(S_2)\geq (1-1/e) OPT'\geq (1-1/e-o(1))OPT.$
Finally, we arbitrarily select a set $S_3$ of $k-|S_1\cup S_2|$ candidates from $C\setminus (S_1\cup S_2)$.
Since the total score function is monotone, $S_1\cup S_2\cup S_3$ is a $(1-1/e-o(1))$-approximation feasible solution.
\end{proof}
%

\subsection{The Case of $\Delta=1$}
\label{sec:degree1}

In this section, we consider the  case $\Delta=1$, i.e., $P_i\cap P_j=\emptyset$ for all $1\leq i<j\leq p$ and prove the following theorem.

\begin{theorem} (Algorithm for $\Delta=1$)
\label{thm:delta=1}
Given a feasible constrained MS multiwinner voting instance with $\Delta=1$, there exists a randomized polynomial-time $(1-1/e)$-approximation algorithm.
\end{theorem}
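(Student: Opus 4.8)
The plan is to follow the same two-stage ``relax, then round'' template used in Theorem~\ref{thm:general}, but to exploit the disjointness of the groups to get a rounding procedure that loses nothing on the fairness constraints, so that the final committee is exactly feasible and the $(1-1/e)$ guarantee is clean (no $o(1)$ loss). First I would run the continuous greedy algorithm of~\cite{calinescu2011maximizing} on the multilinear extension $F$ of $\score$ over the polytope $B = \{y\in[0,1]^m : \sum_i y_i = k,\ \ell_j \le \sum_{i\in P_j} y_i \le u_j\ \forall j\}$; since the instance is feasible, $B\neq\emptyset$, and the continuous greedy produces $y\in B$ with $F(y)\ge (1-1/e)\,OPT$. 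Because $\Delta = 1$, the groups $P_1,\dots,P_p$ are pairwise disjoint; together with the cardinality constraint this makes $B$ (the intersection of the scaled simplex with box constraints on disjoint blocks) a particularly simple polytope — essentially a ``laminar'' / partition structure — which is exactly the setting in which dependent rounding preserves hard constraints exactly.

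The core of the argument is the rounding step. I would design the announced two-layered dependent rounding procedure: think of the coordinates as partitioned into the groups $P_1,\dots,P_p$ plus a leftover block $P_0 = [m]\setminus\bigcup_j P_j$. The inner layer runs dependent rounding (à la~\cite{chekuri2010dependent} or Gandhi et al.) \emph{within} each block $P_j$ on the fractional vector restricted to that block; the marginals $\sum_{i\in P_j} y_i$ are preserved, and one argues by a short case analysis that if $\sum_{i\in P_j}y_i$ is an integer it stays fixed, while in general the rounded count is one of the two integers bracketing $\sum_{i\in P_j}y_i$, both of which lie in $[\ell_j,u_j]$ (here I would want $\ell_j,u_j$ integral, which is given). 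The outer layer then does dependent rounding \emph{across} the blocks on the per-block totals (now all integral or within one of integral), to fix the global cardinality at exactly $k$ without disturbing any already-integral block. Running both layers over the bipartite/forest structure induced by the partition takes linear time, which gives the claimed running time. The key property I need throughout is that dependent rounding, on this forest-structured constraint system, terminates with an integral point that (i) respects all the block box constraints and the cardinality constraint exactly, and (ii) has the negative-correlation / concentration properties that make $\Exp[\score(S)] \ge F(y)$.

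For the score guarantee I would invoke the standard fact (e.g.~\cite[Theorem 2.1]{chekuri2010dependent}, or the martingale argument for pipage/swap rounding) that under dependent rounding the multilinear objective does not decrease in expectation: $\Exp[\score(S)] = \Exp[F(\hat y)] \ge F(y) \ge (1-1/e)OPT$, using submodularity and the convexity of $F$ along directions $\e_i - \e_j$ (Lemma~\ref{claim:convex}). Since $S$ is produced deterministically-feasible (exactly size $k$, all fairness constraints satisfied) by the rounding design, and has expected score at least $(1-1/e)OPT$, we get a randomized $(1-1/e)$-approximation; if a high-probability statement is wanted one can repeat and take the best, or use the concentration bound $\Prob[\score(S)\le (1-\theta)F(y)]\le e^{-F(y)\theta^2/8}$. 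The main obstacle I anticipate is verifying cleanly that the two-layered rounding preserves \emph{every} $\ell_j \le |S\cap P_j|\le u_j$ exactly (not just in expectation) while still fixing $|S| = k$ — this requires checking that the outer layer only ever moves mass between blocks whose current totals are strictly inside their $[\ell_j,u_j]$ windows, and that the process cannot get ``stuck'' with fractional values; the disjointness of the $P_j$ and integrality of $\ell_j, u_j$ are exactly what make this go through, and spelling out that case analysis (essentially: a rounding step on a fractional cycle/path among blocks) is where the real work lies.
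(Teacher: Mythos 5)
Your proposal is correct and matches the paper's proof essentially step for step: continuous greedy over the polytope $B$ to get $F(y)\geq(1-1/e)OPT$, followed by a two-layered swap rounding (first within each disjoint group, then across the remaining one-fractional-coordinate-per-group), with exact preservation of the cardinality and fairness constraints via integrality of $\ell_j,u_j$ and the expected-score guarantee via convexity of $F$ along $\e_i-\e_{i'}$ directions (Lemma~\ref{claim:convex}). This is precisely the paper's Algorithm DegreeOne and its analysis.
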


\topic{Algorithm DegreeOne}
\begin{enumerate}
\item Compute a fractional solution $y\in B$ by the continuous greedy algorithm in \cite[Section 3.1]{calinescu2011maximizing}.
\item For $1\leq j\leq p$, iteratively do the following until $P_j$ has at most $1$ non-integral coordinate: Arbitrarily select two fractional coordinates  $i,i'$ with $c_i,c_{i'}\in P_j$.
Let $\delta_1=\min\left\{1-y_i,y_{i'}\right\}$ and  $\delta_2=\min\left\{y_i,1-y_{i'}\right\}$. Construct two vectors $y_1=y+\delta_1(\e_i-\e_{i'})$ and $y_2=y+\delta_2 (\e_{i'}-\e_i)$. Let $y\leftarrow y_1$ with probability $\nicefrac{\delta_2}{\delta_1+\delta_2}$ and $y\leftarrow y_2$ otherwise.
\item W.l.o.g., assume $y_1,\ldots,y_{\gamma}$ are the remaining fractional coordinates.
Iteratively do the same procedure as in Step 2 until $y$ becomes an integral solution.
\item Output $S$ whose indicator vector is $y=\mathbf{1}_S$.
\end{enumerate}

\noindent
\begin{proof} [of Theorem~\ref{thm:delta=1}]
We first claim the output $S$ is feasible.
Observe that $\sum_{i\in [m]}y_i=k$ always holds, since Steps 2 and 3 do not change this value.
On the other hand, $a_j=\sum_{i\in P_j}y_i$ is invariant throughout Step 2 and rounds to $\left\lfloor a_j \right\rfloor$ or $\left\lceil a_j \right\rceil$ after Step 3.
  Since $\ell_j\leq a_j\leq u_j$ because $y\in B$ in Step 1, the committee $S$ satisfies all fairness constraints.

For the approximation ratio, we have $F(y)\geq (1-1/e)OPT$ in Step 1 by \cite[Appendix A]{calinescu2011maximizing}.
Then by Lemma~\ref{claim:convex}, we have
\begin{eqnarray*}
& &\Exp[F(y)]=\nicefrac{\delta_2}{\delta_1+\delta_2} \cdot F(y_1)+ \nicefrac{\delta_1}{\delta_1+\delta_2} \cdot F(y_2) \\
&\geq& \left(\nicefrac{\delta_2}{\delta_1+\delta_2}+\nicefrac{\delta_1}{\delta_1+\delta_2}\right)
F\big(y+\nicefrac{\delta_2}{\delta_1+\delta_2}\cdot\delta_1(\e_i-\e_{i'}) +\nicefrac{\delta_1}{\delta_1+\delta_2}\cdot\delta_2(\e_{i'}-\e_{i})\big) =F(y)
\end{eqnarray*}
for each iteration of Step 2 and 3. Thus, we conclude that $\Exp[\score(S)]\geq (1-1/e)OPT$.

For the running time, Step 1 completes in polynomial time by the continuous greedy algorithm proposed in \cite{calinescu2011maximizing}. Step 2 and Step 3 only cost $O(m)$ time in total. Thus, the algorithm runs in polynomial time.
\end{proof}

\noindent
We remark that for the special case of the SNTV rule, there exists an algorithm to compute the optimal solution in $O(m\log m+n)$ time; see Appendix~\ref{app:SNTVdelta=1} for details.

\section{Empirical Results}
\label{sec:experiment}

We compare the performance of winning committees without fairness constraints and with fairness constraints, under several commonly used MS multiwinner voting rules. 
We assume that the voter preferences are generated according to a two-dimensional Euclidean model as in~\cite{schofield2007spatial} which suggests that voters' political
opinions can be described sufficiently well in two dimensions.
Multiwinner voting rules may introduce or exacerbate bias in such preference models. 
We show that the fairness constraints can prevent this, and moreover, the score attained by the fair result is close to the score attained by the optimal unconstrained (and hence, biased) committee.

\subsection{Setup}
\label{sec:setup}

\paragraph{Voting Rules.}
We consider five MS multiwinner voting rules: SNTV, Bloc, $k$-Borda, $\alpha$-CC and $\beta$-CC.
The definitions of SNTV, $\alpha$-CC and $\beta$-CC appear in Section~\ref{sec:pre}.
{Bloc} outputs the $k$ candidates that have the most voters listing them in the top $k$ of their list.
{$k$-Borda} outputs the $k$ candidates with the largest sum of the Borda scores (Definition~\ref{def:betaCC}) that she receives from all voters.
As a baseline, we also consider the Random voting rule that simply selects a committee uniformly at random.
In all cases, we let $k=12$ be the size of the desired committee.

\paragraph{Sampling Candidates, Voters and Preferences.}
We generate 400 voters and 120 candidates where each voter and candidate is represented by a point in the $[-3,3]\times [-3,3]$ square.
1/4 of the voters are sampled uniformly at random from each quadrant, and 
1/3 of the candidates are sampled uniformly from the first quadrant, 1/4 the second quadrant, 1/6 the third quadrant, and 1/4 the fourth quadrant.
As in \cite{elkind2017multiwinner}, we use Euclidean preferences; given a pair of candidates $c_i,c_j\in \R^2$, a voter $a\in \R^2$ prefers $c_i$ to $c_j$ if $d\left(c_i,a\right)<d\left(c_j,a\right)$ where $d\left(\cdot,\cdot\right)$ is the Euclidean distance.

\paragraph{Groups and Fairness Constraints.}
We consider each quadrant to be a different group, and let $P_i$ be the collection of candidates in the i-th quadrant. We compare the performance of three different types of constraints to the unconstrained optimal solution and the random baseline.
\begin{itemize}
	\item \textbf{Proportional w.r.t. voters:} the number of winners is proportional to the number of voters in each quadrant. 
	\item \textbf{Proportional w.r.t. candidates:} The number of winners is proportional to the number of candidates in each quadrant. 
	\item \textbf{Relax:} The number of winners in each quadrant is allowed to be anywhere in the range between what would be proportional to voters and proportional to candidates. 
\end{itemize}

\paragraph{Metrics.}
As a metic of fairness, we consider the Gini index, a well-studied metric for inequality: 
Let $p$ be the total number of groups, and let $n_i$ be the number of winners in each group $P_i$, then the Gini index is defined to be
$\frac{\sum_{i=1}^p \sum_{j=1}^p \left|n_i-n_j \right|}{2p\sum_{i=1}^p n_j}$.
This measures, on a scale from 0 to 1, how disproportionate the distribution of winners is amongst the groups (here, quadrants), with 1 meaning complete inequality and 0 meaning complete equality.
In this context, the Gini index should be interpreted relationally, observing how different methods and constraints change the index up or down, as opposed to thinking of it as prescribing a correct outcome.

In addition, for each voting method we measure the price of fairness, i.e., the ratio between its score and the optimal unconstrained score. A ratio close to 100\% indicates that despite adding fairness constraints the method attains close to an optimal score.

\subsection{Results}
\label{sec:analysis}

We report the mean and the standard deviation of the Gini index, and the mean of the score ratio of 1000 repetitions in Table \ref{tab:experiment}, and depict the outcome of 1000 repetitions in Table~\ref{tab:visualization}.

\paragraph{Unconstrained rules introduce bias.}

We observe that there can be bias in the unconstrained voting rules as the Gini index is large; in particular, for several rules, the inequality is larger than that of a random set of candidates suggesting that underlying disproportionalities in the set of candidates can be exacerbated by certain rules.
For instance, observe the Bloc rule in Table~\ref{tab:visualization}. The unconstrained optimal committee only includes few winners in the first quadrant.
The intuition is that having many candidates in the same quadrant thins their supporters due to competition, leading to disproportionately fewer representatives as winners.

For $\alpha$-CC and $\beta$-CC, the unconstrained Gini index is close to that the Gini index when constraints are placed to be proportional with respect to candidates. 
However, the standard deviation in both cases is 0.06, which suggests that there can be significant bias in some outcomes of this process. 
The constrained process, on the other hand, will always satisfy the corresponding Gini index exactly.

The fairness constraints allow us to fix a particular distribution of winners (and hence corresponding Gini index), in this case either proportional with respect to voters or candidates, and hence allows us to de-bias the result in any desired manner. The result of the relaxed constraints, which allow the distribution of winners within these ranges 
vary with some rules tending towards proportionality with respect to voters and others with respect to candidates.

\paragraph{The price of fairness is small.}
Because the feasible space of committees becomes smaller in the constrained settings, the optimal constrained score may be less than its unconstrained counterpart.
However, in Table \ref{tab:experiment} we observe that the fairness constraints do not decrease the score by much; in fact 
for the $\alpha$-CC or $\beta$-CC rules the score does not decrease at all.
\footnote{The ratio $100\%$ for $\beta$-CC is due to rounding.} 
This is not just a matter of there being many good committees; indeed a random committee in these settings does not perform well in comparison.
Hence, the ``price of fairness'' is small in this setting; understanding this quantity more generally remains an important direction for future work.

\begin{table}
	\centering
	
	\begin{tabular}{p{0.01\textwidth} p{0.1\textwidth} p{0.1\textwidth} p{0.1\textwidth} p{0.1\textwidth} p{0.1\textwidth} p{0.1\textwidth} p{0.1\textwidth} p{0.1\textwidth}}
		& \multicolumn{2}{c}{Unconstrained} & \multicolumn{2}{c}{Prop w.r.t. voters} & \multicolumn{2}{c}{Prop w.r.t. candidates} & \multicolumn{2}{c}{Relax} \\
		\rotatebox{90}{SNTV}  & \fbox{\includegraphics[keepaspectratio, width=\linewidth]{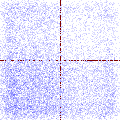}} &  \fbox{\includegraphics[keepaspectratio, width=\linewidth]{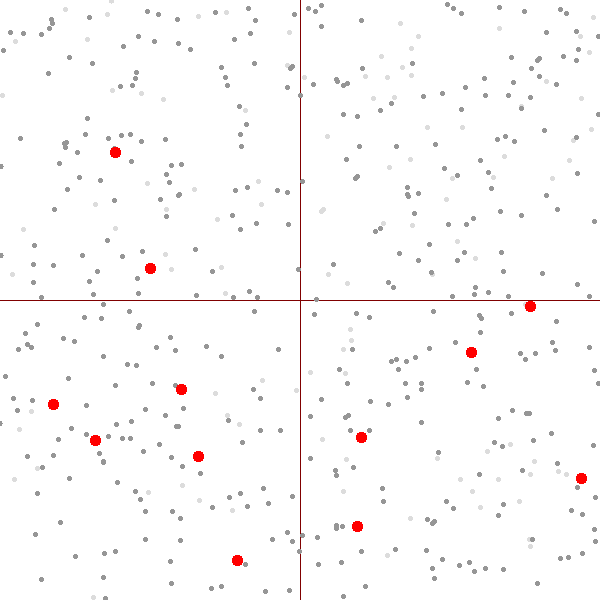}} & \fbox{\includegraphics[keepaspectratio, width=\linewidth]{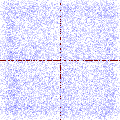}} & \fbox{\includegraphics[keepaspectratio, width=\linewidth]{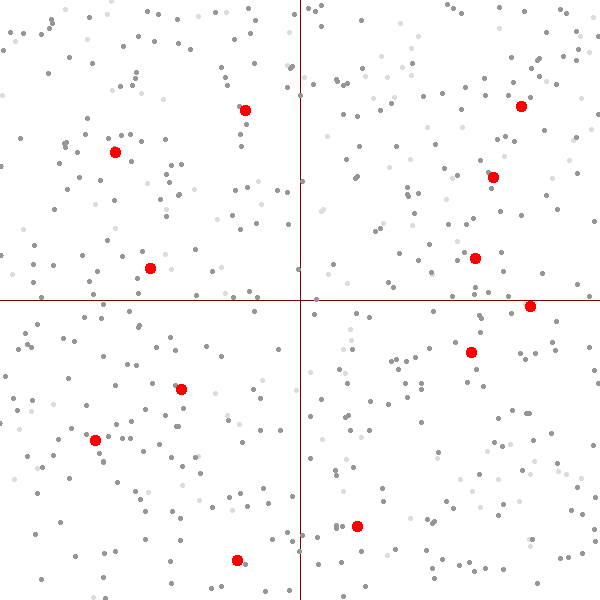}} & \fbox{\includegraphics[keepaspectratio, width=\linewidth]{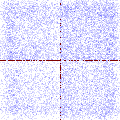}} & \fbox{\includegraphics[keepaspectratio, width=\linewidth]{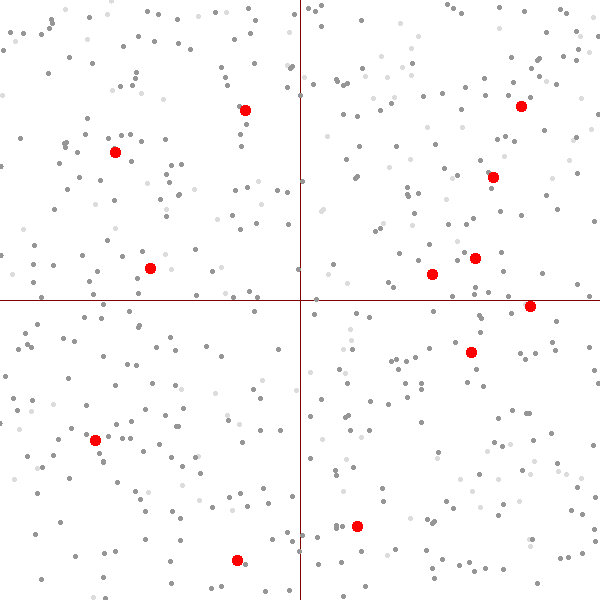}} & \fbox{\includegraphics[keepaspectratio, width=\linewidth]{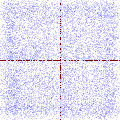}} & \fbox{\includegraphics[keepaspectratio, width=\linewidth]{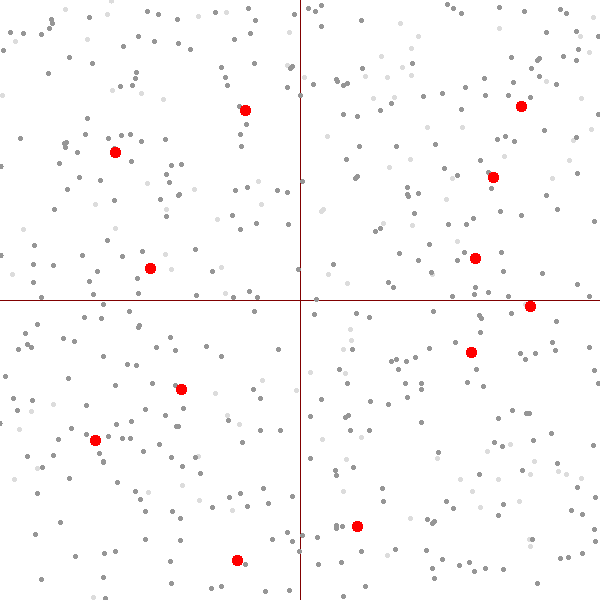} } \\
		\rotatebox{90}{Bloc}  & \fbox{\includegraphics[keepaspectratio, width=\linewidth]{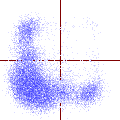}} &  \fbox{\includegraphics[keepaspectratio, width=\linewidth]{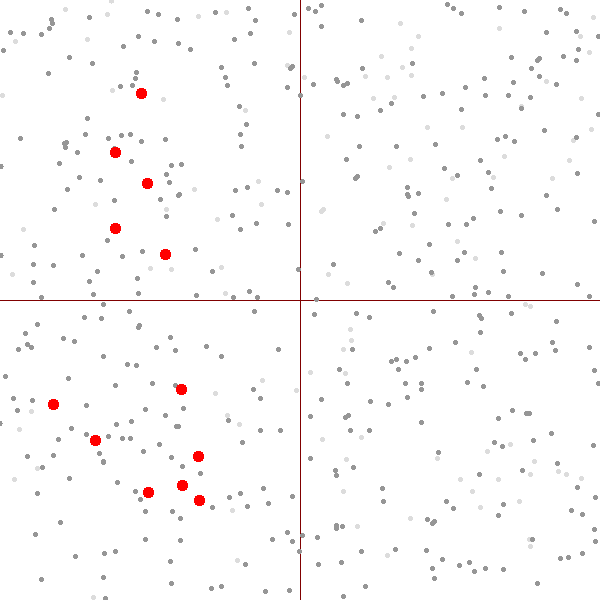}} & \fbox{\includegraphics[keepaspectratio, width=\linewidth]{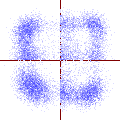}} & \fbox{\includegraphics[keepaspectratio, width=\linewidth]{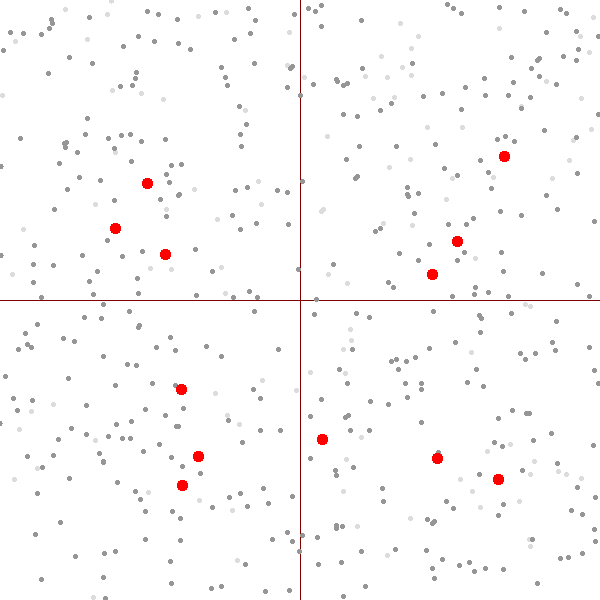}} & \fbox{\includegraphics[keepaspectratio, width=\linewidth]{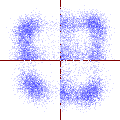}} & \fbox{\includegraphics[keepaspectratio, width=\linewidth]{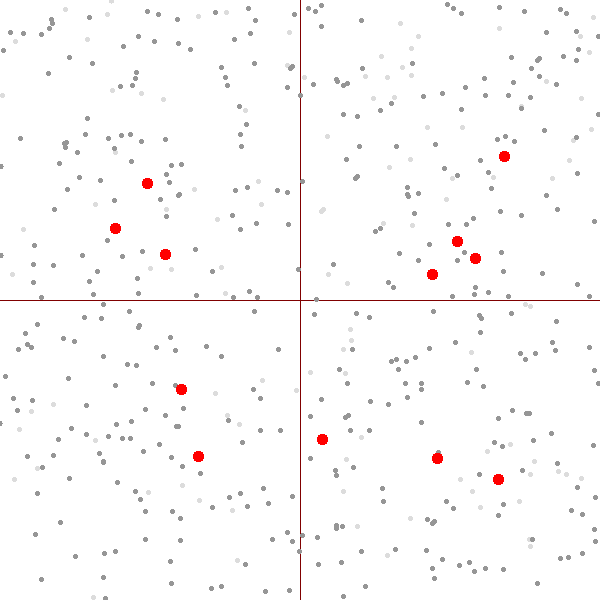}} & \fbox{\includegraphics[keepaspectratio, width=\linewidth]{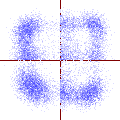}} & \fbox{\includegraphics[keepaspectratio, width=\linewidth]{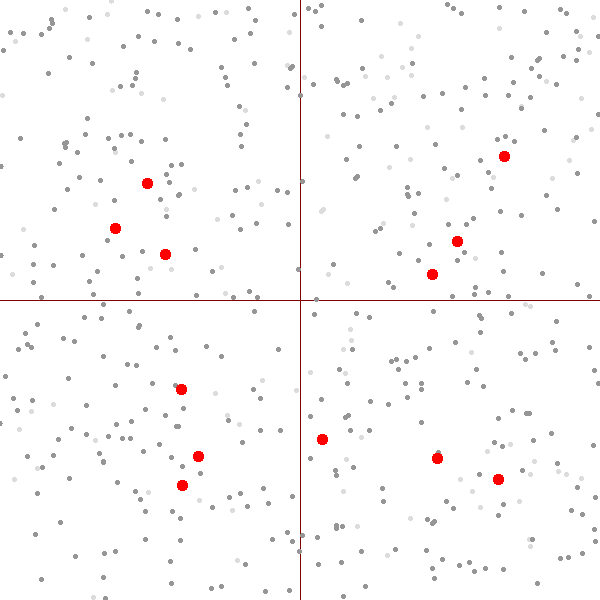} } \\
		\rotatebox{90}{$k$-Borda}  & \fbox{\includegraphics[keepaspectratio, width=\linewidth]{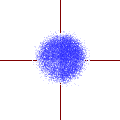}} &  \fbox{\includegraphics[keepaspectratio, width=\linewidth]{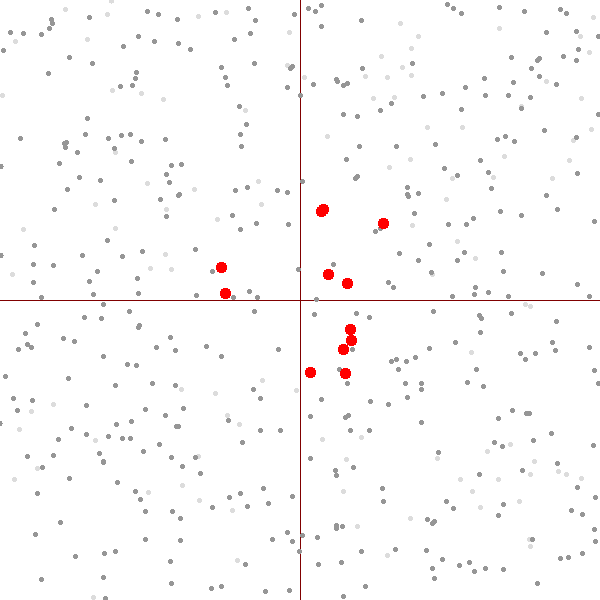}} & \fbox{\includegraphics[keepaspectratio, width=\linewidth]{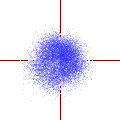}} & \fbox{\includegraphics[keepaspectratio, width=\linewidth]{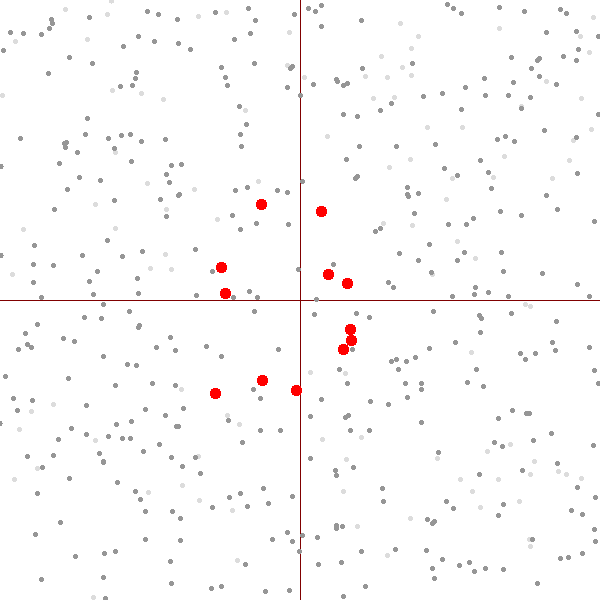}} & \fbox{\includegraphics[keepaspectratio, width=\linewidth]{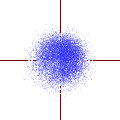}} & \fbox{\includegraphics[keepaspectratio, width=\linewidth]{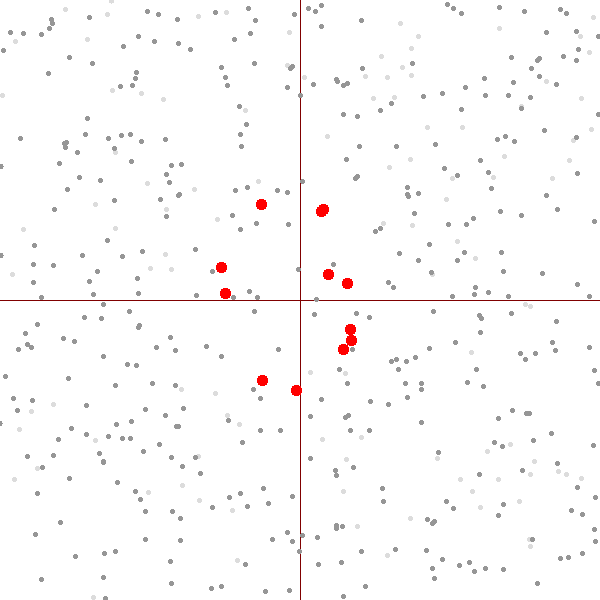}} & \fbox{\includegraphics[keepaspectratio, width=\linewidth]{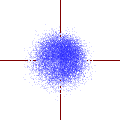}} & \fbox{\includegraphics[keepaspectratio, width=\linewidth]{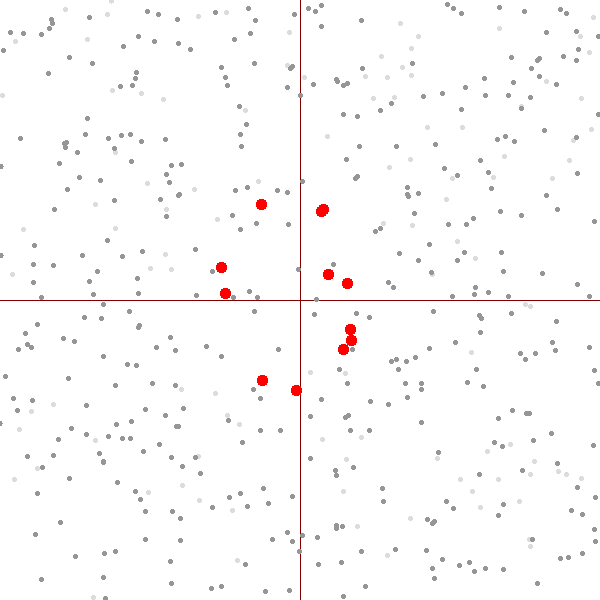} } \\
		\rotatebox{90}{$\alpha$-CC}  & \fbox{\includegraphics[keepaspectratio, width=\linewidth]{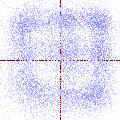}} &  \fbox{\includegraphics[keepaspectratio, width=\linewidth]{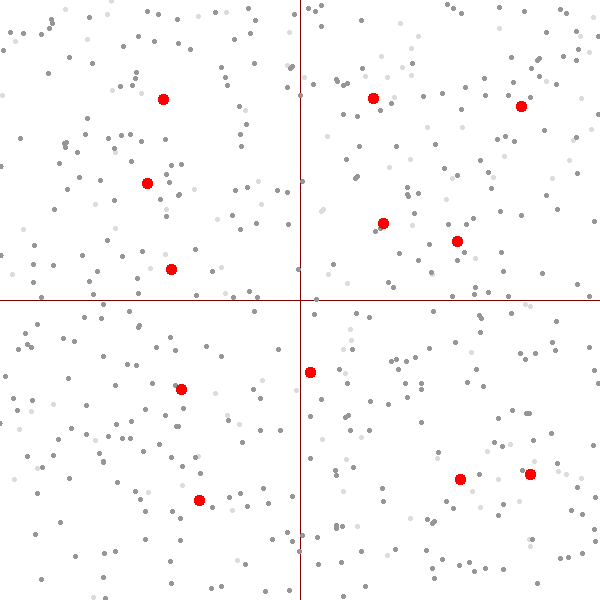}} & \fbox{\includegraphics[keepaspectratio, width=\linewidth]{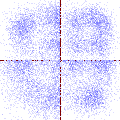}} & \fbox{\includegraphics[keepaspectratio, width=\linewidth]{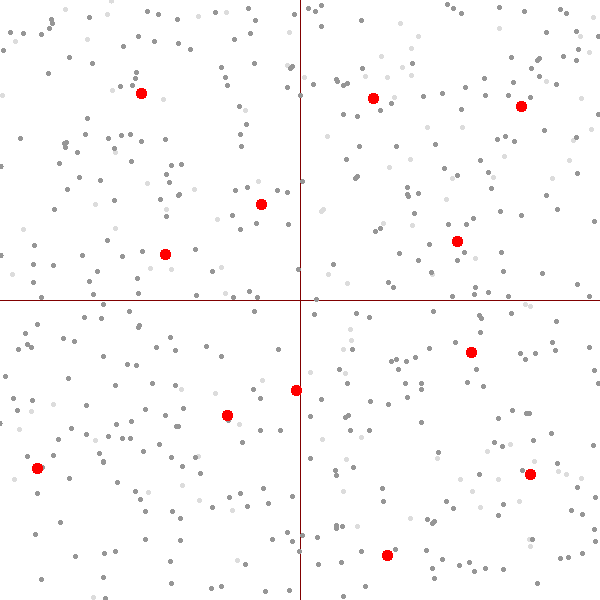}} & \fbox{\includegraphics[keepaspectratio, width=\linewidth]{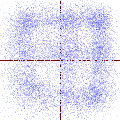}} & \fbox{\includegraphics[keepaspectratio, width=\linewidth]{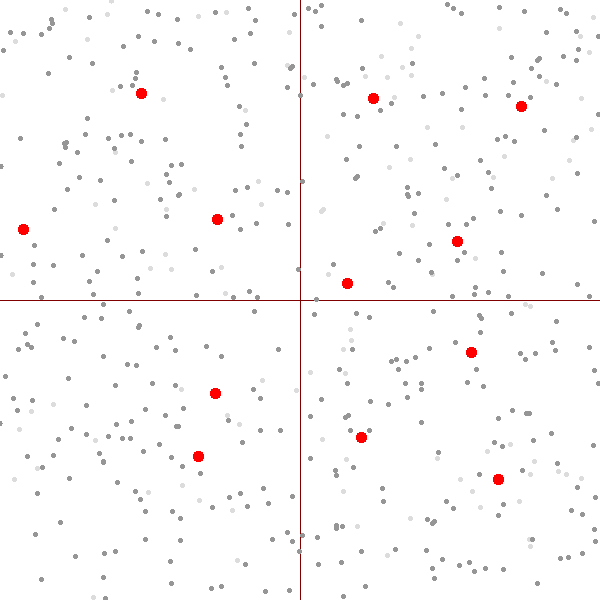}} & \fbox{\includegraphics[keepaspectratio, width=\linewidth]{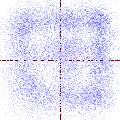}} & \fbox{\includegraphics[keepaspectratio, width=\linewidth]{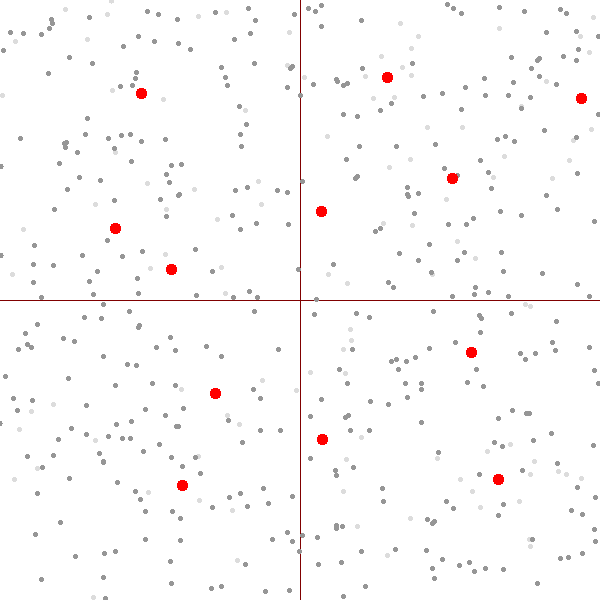} } \\
		\rotatebox{90}{$\beta$-CC}  & \fbox{\includegraphics[keepaspectratio, width=\linewidth]{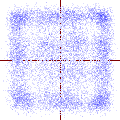}} &  \fbox{\includegraphics[keepaspectratio, width=\linewidth]{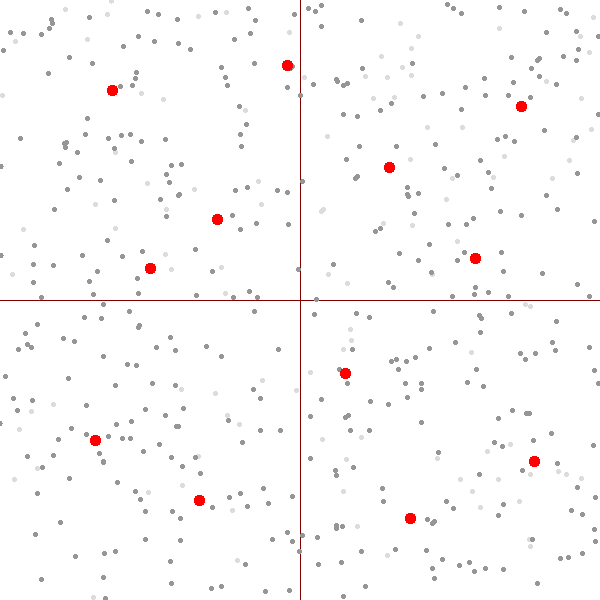}} & \fbox{\includegraphics[keepaspectratio, width=\linewidth]{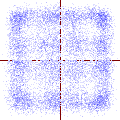}} & \fbox{\includegraphics[keepaspectratio, width=\linewidth]{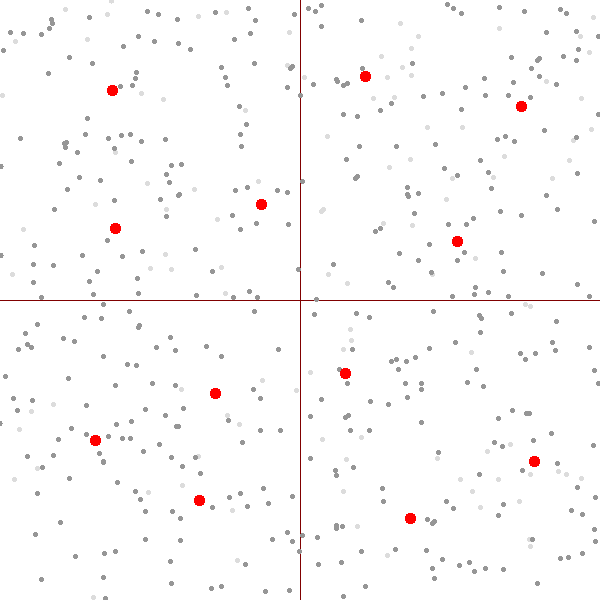}} & \fbox{\includegraphics[keepaspectratio, width=\linewidth]{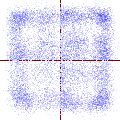}} & \fbox{\includegraphics[keepaspectratio, width=\linewidth]{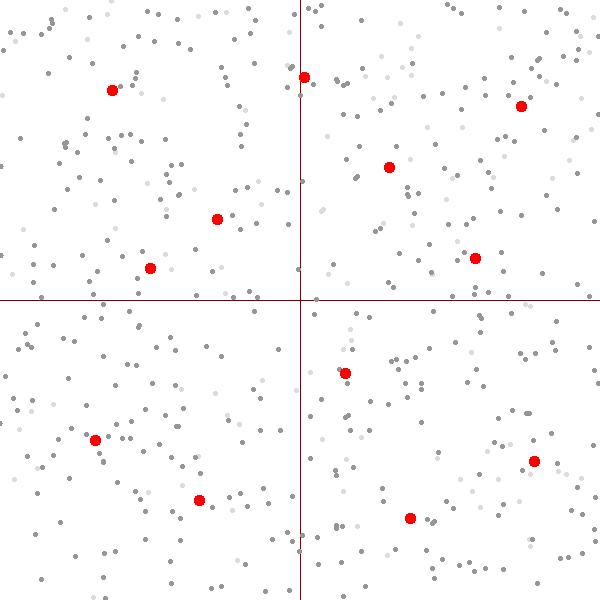}} & \fbox{\includegraphics[keepaspectratio, width=\linewidth]{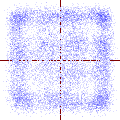}} & \fbox{\includegraphics[keepaspectratio, width=\linewidth]{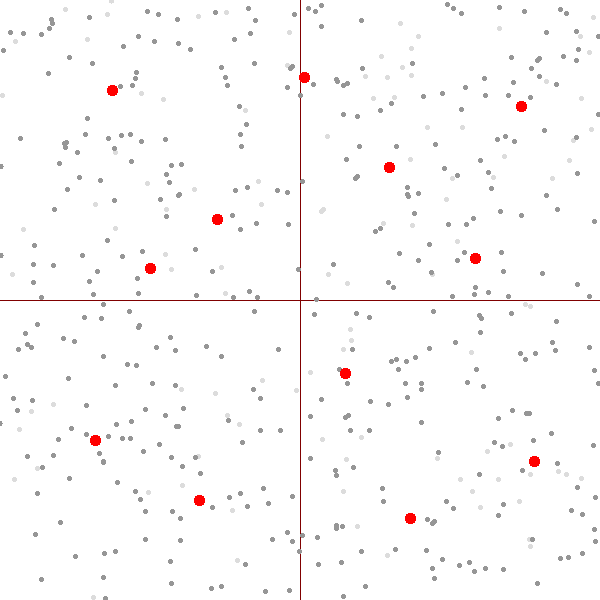} }
	\end{tabular}
	
	\caption{
		Depicts the winners from 1000 experiments for different rules and different fairness constraints.
		For sample election, dark gray dots represent voters, light gray dots represent candidates, and red dots represent the winners.}
	\label{tab:visualization}
\end{table}

\section{Conclusion}
\label{sec:conclusion}

In this paper, we propose a general model for multiwinner
voting that can handle arbitrary MS score functions, can
simultaneously ensure fairness across multiple sensitive attributes
and can satisfy many different existing notions of fairness.
We also develop efficient approximation algorithms for
our model in different settings. The empirical results show
that our model indeed eliminates bias compared to unconstrained
voting, with a controlled price of fairness.

\begin{table*}[t] 
	\footnotesize{
		\begin{tabular}{|*{11}{c|}}
			\multirow{2}{*}{Rule} &  \multicolumn{2}{c|}{Unconstrained} & \multicolumn{2}{c|}{Prop w.r.t. voters} & \multicolumn{2}{c|}{Prop w.r.t. candidates} & \multicolumn{2}{c|}{Relax} & \multicolumn{2}{c|}{Random} \\ 
			& Gini & \% Opt & Gini & \% Opt & Gini & \% Opt & Gini & \% Opt & Gini & \% Opt \\ 
			\hline
			SNTV & 0.24 (0.09)& 100  & 0 (0) & 97.0  & 0.125 (0)& 94.2  & 0.01 (0.02) & 97.0  & 0.22 (0.09)& 37.1  \\  \cline{1-11}
			Bloc & 0.28 (0.10)& 100   & 0 (0)& 91.6  & 0.125 (0)& 88.4  & 0.00 (0.00)& 91.6  & 0.22 (0.09) &  61.9  \\ \cline{1-11}
			$k$-Borda & 0.24 (0.09)& 100   & 0 (0) & 98.9  & 0.125 (0)& 99.3  & 0.11 (0.04)& 99.3  & 0.22 (0.09) & 72.6  \\  \cline{1-11}
			$\alpha$-CC & 0.15 (0.06)& 100   & 0 (0)& 100  & 0.125 (0)& 100  & 0.10 (0.05)& 100  & 0.22 (0.09)& 73.5  \\  \cline{1-11}
			$\beta$-CC & 0.11 (0.06)& 100   & 0 (0)& 100  & 0.125 (0)& 100  & 0.07 (0.06)& 100  & 0.22 (0.09)& 95.8  \\   \hline
		\end{tabular}
	}
	\label{tab:experiment}	
	\caption{Under different rules, the average Gini index, the standard deviation of Gini index, and the average ratio of the constrained optimal score over the unconstrained optimal score. 
	}
\end{table*}

\section*{Acknowledgment}
We would like to thank Jian Li for discussions on submodular optimization.

\bibliography{fairvoting}
\bibliographystyle{plain}

\appendix

\section{APX-Hardness for $\alpha$-CC and $\beta$-CC: $\Delta=1$ and $p=2$}
\label{app:APXhard}

In this section, we consider a special case of $\Delta=1$ and $p=2$. We prove that the constrained $\beta$-CC multiwinner voting problem is $(1-1/e)$-inapproximable. This is in contrast with the unconstrained $\beta$-CC multiwinner voting problem which has a PTAS \cite{skowron2015achieving}.

\begin{theorem} (APX-hardness for $\alpha$-CC and $\beta$-CC)
\label{thm:simplelower}
Given the constrained $\alpha$-CC or $\beta$-CC multiwinner voting problem with $\Delta=1$ and $p=2$,  there is no polynomial-time $(1-1/e+\varepsilon)$-approximation algorithm for any constant $\eps>0$ unless $P=NP$.
\end{theorem}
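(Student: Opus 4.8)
\medskip
\noindent\textbf{Proof plan.} The $\alpha$-CC case is inherited: the \emph{unconstrained} $\alpha$-CC problem is already $(1-1/e+\varepsilon)$-inapproximable \cite{skowron2015achieving}, and any $\alpha$-CC instance becomes a (vacuously) constrained $\Delta=1$, $p=2$ instance by adding two groups $P_1,P_2$ with $\ell_1=\ell_2=0$, $u_1=|P_1|$, $u_2=|P_2|$. So the real work is the $\beta$-CC case, for which the unconstrained problem admits a PTAS \cite{skowron2015achieving}; hence the constraints must be used essentially. I would reduce from Max $k$-Cover — equivalently, from the unconstrained $\alpha$-CC hardness instance of \cite{skowron2015achieving} — which for every constant $\varepsilon'>0$ is NP-hard to approximate within $1-1/e+\varepsilon'$: given a universe $U$ with $|U|=t$ (which we may assume exceeds any fixed constant), sets $S_1,\dots,S_N\subseteq U$, and budget $k$, it is NP-hard to distinguish $\mathrm{OPT}_{\mathrm{cov}}=t$ (YES) from $\mathrm{OPT}_{\mathrm{cov}}\le(1-1/e+\varepsilon')t$ (NO).

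\noindent The construction I would build has candidate set $P_1\cup P_2$, where $P_1=\{c_1,\dots,c_N\}$ is one candidate per set and $P_2=\{d_1,\dots,d_B\}$ is a ``spacer'' group with $B:=t^2N$; these are the only groups, so $p=2$ and $\Delta=1$. Set $\ell_1=0,u_1=N$ (vacuous) and $\ell_2=u_2=0$, so no candidate of $P_2$ may be chosen, and let the committee size be $k$. For each $u\in U$ create one voter $a_u$ whose preference list ranks, in order: the set-candidates $c_j$ with $u\in S_j$ (positions $1,\dots,\deg(u)$), then all of $P_2$ (positions $\deg(u)+1,\dots,\deg(u)+B$), then the remaining set-candidates $c_j$ with $u\notin S_j$ (positions $\deg(u)+B+1,\dots,m$), where $m=N+B$. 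I would then check two facts. First, since the only candidates outside $P_1$ are the forbidden ones in $P_2$, every feasible committee $S$ is exactly a size-$k$ subset of $P_1$, i.e., a choice of $k$ sets, and $u$ is covered by those sets iff some selected $c_j$ has $u\in S_j$. Second, under Borda scores $\gamma_m(i)=m-i$, voter $a_u$ contributes $m$ minus the position of its top-ranked selected candidate: a covered $u$ contributes in $[m-N,m-1]=[B,B+N-1]$ (that candidate is one of its covering set-candidates, at a position in $\{1,\dots,\deg(u)\}\subseteq\{1,\dots,N\}$), while an uncovered $u$ contributes at most $m-B-1=N-1$ (all selected candidates lie below the spacer block, at position $>B$).

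\noindent Summing over the $t$ voters gives $B\cdot\mathrm{cov}(S)\le\score(S)\le B\cdot\mathrm{cov}(S)+tN$ for every feasible $S$, hence the optimum of the $\beta$-CC instance equals $B\cdot\mathrm{OPT}_{\mathrm{cov}}+\theta$ for some $\theta\in[0,tN]$. A $(1-1/e+\varepsilon)$-approximate committee $S$ then satisfies $B\cdot\mathrm{cov}(S)\ge(1-1/e+\varepsilon)\bigl(B\cdot\mathrm{OPT}_{\mathrm{cov}}\bigr)-tN$, i.e.\ $\mathrm{cov}(S)\ge(1-1/e+\varepsilon)\,\mathrm{OPT}_{\mathrm{cov}}-tN/B=(1-1/e+\varepsilon)\,\mathrm{OPT}_{\mathrm{cov}}-1/t$. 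Since $\mathrm{cov}$ is integer-valued, on a YES instance this forces $\mathrm{cov}(S)\ge(1-1/e+\varepsilon)t-1>(1-1/e+\varepsilon')t$ once we take $\varepsilon'<\varepsilon/2$ and $t$ large, whereas on a NO instance every committee has $\mathrm{cov}(S)\le(1-1/e+\varepsilon')t$. Thus a polynomial-time $(1-1/e+\varepsilon)$-approximation algorithm would decide the Max $k$-Cover gap problem, contradicting $P\ne NP$.

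\noindent The crux — and the reason the unconstrained PTAS does not interfere — is the \emph{forbidden} spacer group $P_2$: its $B$ candidates can never be chosen, yet they still occupy genuine positions (and still enlarge $m$, hence shift every positional score), so inserting this block immediately after the ``useful'' candidates in each voter's list forces an uncovered voter's Borda contribution to drop by $\approx B$, effectively binarizing the graded objective up to an additive $O(tN)$ error. The only delicate point is choosing $B$ polynomially large enough ($B=t^2N$ suffices) that this error is a negligible fraction of $B\cdot t$, so that the multiplicative $(1-1/e+\varepsilon)$ gap for $\beta$-CC pulls back below the $(1-1/e+\varepsilon')t$ coverage threshold; the rest is routine bookkeeping. (The same construction, with committee size taken in $[N,B]$ and the ``$k$ sets'' requirement enforced via a handful of ungrouped filler candidates, also yields the $\alpha$-CC statement directly, but citing the unconstrained $\alpha$-CC hardness is simpler.)
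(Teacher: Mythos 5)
Your proposal is correct and essentially matches the paper's proof: both dispose of $\alpha$-CC via the unconstrained hardness, and both prove the $\beta$-CC case by reducing from maximum coverage using a large \emph{forbidden} filler group ($\ell=u=0$) ranked immediately after each voter's covering candidates, so that the Borda score equals (a large constant times) the coverage up to a negligible additive term. The differences are only bookkeeping: you use the gap version of Max $k$-Cover with $B=t^2N$ spacers and vacuous constraints on $P_1$, whereas the paper pins $|S\cap P_1|=k$ via $\ell_1=u_1=k$, uses $nm^2$ total candidates, and transfers the approximation ratio directly (handling small $m$ by enumeration).
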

\noindent
\begin{proof}
The unconstrained $\alpha$-CC multiwinner voting problem is equivalent to the maximum $k$-coverage problem, which is $(1-1/e)$-inapproximable by \cite{feige1998threshold}. Hence we focus on the $\beta$-CC rule. If $m\leq 3/\varepsilon$, we can enumerate all possible committees (at most $2^m=O(1)$ committees) and compute the optimal solution. Therefore, we only need to consider the case of $m\geq 3/\varepsilon$. We present a reduction from the maximum coverage problem.

Let $S_1,\ldots,S_m\subseteq [n]$ be $m$ sets and let $k$ be an integer for which we want to pick $k$ sets to cover as many elements as possible. We construct an instance of the constrained $\beta$-CC multiwinner voting problem as follows.
Construct a corresponding candidate $c_i$ for each set $S_i$, and construct $nm^2-m$ additional candidates $c_{m+1},\ldots,c_{nm^2}$. For each element $j$, construct a corresponding voter $a_i$. So there are  $nm^2$ candidates and $n$ voters in total. We define the preference order of each voter $a_i$ as follows.
\begin{enumerate}
\item W.l.o.g., assume $i$ belongs to $S_1,\ldots,S_t$. Let the favourite $t$ candidates of $a_i$ be $c_1,\ldots,c_t$ (the order can be arbitrary).
\item Let the next $nm^2-m$ candidates in the preference order of $a_i$ be $c_{m+1},\ldots,c_{nm^2-m}$ (the order can be arbitrary).
\item Let the remaining $m-t$ candidates in the preference order of $a_i$ be $c_{t+1},\ldots,c_m$ (the order can be arbitrary).
\end{enumerate}
\noindent
Finally, define $P_1=\left\{c_1,\ldots,c_m\right\}$ with $l_1=u_1=k$, and $P_2=\left\{c_{m+1},\ldots,c_{nm^2}\right\}$ with $l_2=u_2=0$.

In the following, we assume there exists a $(1-1/e+\varepsilon)$-approximation algorithm for the above constrained $\beta$-CC multiwinner voting instance. For the constrained $\beta$-CC multiwinner voting instance, suppose we output a committee $SOL$ with $\score(SOL)\geq (1-1/e+\varepsilon) OPT$.
For the  maximum $k$-coverage instance, we claim that $SOL'=\left\{S_j\mid c_j\in SOL\right\}$ is a $(1-1/e)$-approximation solution. Let $O'$ denote the optimal solution of the maximum $k$-coverage instance, and $\eta$ denote the number of elements covered by $O'$. Consider the committee $\left\{c_j\mid S_j\in OPT'\right\}$. The total score of this committee is at least $\eta(nm^2-m)$, which implies
\begin{eqnarray*}
\score(SOL)&\geq& (1-1/e+\varepsilon) OPT \\
&\geq& (1-1/e+\varepsilon)\eta(nm^2-m).
\end{eqnarray*}
If $SOL'$ covers less than $(1-1/e)\eta$ elements, we have
$$
\score(SOL)< (1-1/e)\eta (nm^2-1)+nm.
$$
Since $m\geq 3/\varepsilon$, the above two inequalities can not hold at the same time. Therefore, $SOL'$ must cover at least $(1-1/e)\eta$ elements, which implies that there exists a $(1-1/e)$-approximation algorithm for the maximum $k$-coverage problem. This completes the proof.
\end{proof}
%

\section{The Case that $p$ is Constant}
\label{app:constant}

Sometimes, the number of groups under consideration is small. In this case, we have the following theorem.

\begin{theorem} (approximation algorithm: $p$ is constant)
\label{thm:constant}
Assume $p$ is constant. There exists a $(1-1/e)$-approximation algorithm for the constrained multiwinner voting problem. Especially, there exists a polynomial-time algorithm to compute the optimal solution if using the SNTV rule.
\end{theorem}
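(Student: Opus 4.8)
The plan is to exploit the fact that when $p$ is a constant the fairness constraints depend only on a constant-dimensional summary of any committee, so one can enumerate all such summaries and, for each, solve a clean submodular optimization problem. First I would group candidates by their pattern of group membership: for $c\in C$ set $T(c)=\{i\in[p]:c\in P_i\}$ and for $\tau\subseteq[p]$ set $Q_\tau=\{c\in C:T(c)=\tau\}$, so the nonempty $Q_\tau$ form a partition of $C$ into at most $2^p$ blocks. The key observation is that for any committee $S$ and any $i\in[p]$ we have $|S\cap P_i|=\sum_{\tau\ni i}|S\cap Q_\tau|$, so whether $S$ is feasible (satisfies all fairness constraints together with $|S|=k$) is determined entirely by the \emph{count vector} $n(S)=\big(|S\cap Q_\tau|\big)_{\tau\subseteq[p]}\in\Z_{\ge 0}^{2^p}$.

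Next I would enumerate all feasible count vectors: integer vectors $(n_\tau)_{\tau\subseteq[p]}$ with $0\le n_\tau\le|Q_\tau|$, $\sum_\tau n_\tau=k$, and $\ell_i\le\sum_{\tau\ni i}n_\tau\le u_i$ for every $i\in[p]$. Since $p$ is constant and $k\le m$, there are at most $(k+1)^{2^p}=\mathrm{poly}(m)$ of them, and the instance is feasible iff at least one exists. For each feasible count vector the subproblem ``maximize $\score(S)$ over committees $S$ with $|S\cap Q_\tau|=n_\tau$ for all $\tau$'' is exactly the maximization of the monotone submodular function $\score$ over the bases of the partition matroid on $C$ with block caps $(n_\tau)$ (its rank being $\sum_\tau n_\tau=k$). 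By \cite{calinescu2011maximizing}, i.e.\ the continuous-greedy-plus-rounding machinery already invoked in the proof of Theorem~\ref{thm:general}, this admits a polynomial-time $(1-1/e)$-approximation; taking the continuous-greedy fractional optimum on the base polytope (possible by monotonicity) and then swap rounding yields a committee with exactly $n_\tau$ candidates from each block, hence of size $k$ and feasible. Outputting the best committee found over all feasible count vectors gives the claimed bound: an optimal committee $S^\star$ realizes one of the enumerated count vectors, on which the subroutine returns value at least $(1-1/e)\score(S^\star)=(1-1/e)OPT$, and every committee output is feasible by construction.

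For the SNTV rule the same scheme gives an \emph{exact} polynomial-time algorithm, because the SNTV score is modular: $\score(S)=\sum_{c\in S}w_c$, where $w_c$ is the number of voters ranking $c$ first. Hence for a fixed feasible count vector the optimal committee simply takes the $n_\tau$ largest-weight candidates from each block $Q_\tau$, which is trivially computable; ranging over the polynomially many feasible count vectors and returning the best committee yields the exact optimum.

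I expect the only real subtlety to be the bookkeeping around ``base'' versus ``independent set'': one must ensure the $(1-1/e)$-approximate solution produced for a fixed count vector actually has exactly $n_\tau$ candidates in each block (so that the lower bounds $\ell_i$ are met, not merely the upper bounds $u_i$), which follows from monotonicity together with the feasibility condition $n_\tau\le|Q_\tau|$ — either by rounding a base-polytope point directly, or by padding an under-filled block with arbitrary remaining candidates of that block. The remaining ingredients — that the enumeration is polynomial for constant $p$, and that a partition-matroid-constrained monotone submodular maximum is $(1-1/e)$-approximable — are routine.
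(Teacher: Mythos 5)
Your proposal is correct and follows essentially the same route as the paper's proof: partition candidates by their type set, enumerate the polynomially many feasible count vectors (the paper's ``available vectors''), solve each subproblem as monotone submodular maximization under a partition matroid via \cite{calinescu2011maximizing}, pad under-filled blocks using monotonicity, and take the best; the SNTV case is likewise handled by per-block greedy selection exploiting modularity. No substantive differences.
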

\noindent
\begin{proof} [Proof of Theorem~\ref{thm:constant}]
Let $T_i:=\left\{j\in [p]: i\in P_j\right\}$ be the set of types that candidate $c_i$ has. Let $q$ be the number of distinct $T_i$s. Since $p$ is constant, $q\leq 2^p$ is also constant. For each $T_i$, construct a corresponding set $V_i\subseteq C$ of candidates whose types are $T_i$. All $V_i$ form a partition of $C$.
We call an integral vector $\mu=(k_1,\ldots,k_q)$ an ``available vector" if it is feasible to select exactly $k_i$ candidates from $V_i$ for all $i\in [q]$. Note that there are at most $m^q$ available vectors.

For each available vector $\mu=(k_1,\ldots,k_q)$, we construct a constrained multiwinner voting instance: Denote $V_1,\ldots,V_q$ to be the groups and $\calB_{\mu}$ to be the collection of committees satisfying $|S\cap V_i|\leq k_i$ $\forall i\in [q]$. The objective is to find a committee $S\in \calB_{\mu}$ which maximizes $\score(S)$. Let $OPT_{\mu}$ denote the optimal score of this problem. We find this problem is equivalent to maximizing a monotone submodular function constrained to a partition matroid. By \cite{calinescu2011maximizing}, there exists a polynomial-time $(1-1/e)$-approximation algorithm.
Hence it is able to compute a committee $S_{\mu}$ of size at most $k$ with $\score(S_\mu)\geq (1-1/e)OPT_{\mu}$. Then we expand $S_{\mu}$ until there are exactly $k_i$ candidates from each $V_i$. By monotonicity, this procedure can not decrease $\score(S_\mu)$. Since $\mu=(k_1,\ldots,k_q)$ is an available vector, the committee $S_{\mu}$ must be feasible.
Among all committees $S_{\mu}$, we select the one with the largest total score. Since the optimal committee also corresponds to some available vector, our selected committee must be a $(1-1/e)$-approximation solution.

Especially, the total score function for the SNTV rule is linear. Therefore, we can greedily compute the optimal feasible committee corresponding to each available vector. Among all such committees, we select the one with the largest total score, which is exactly the optimal solution.
\end{proof}

\section{An Algorithm for Constrained SNTV Multiwinner Voting when $\Delta=2$}
\label{app:SNTV}

In this section, we consider the case of $\Delta=2$ using the SNTV rule and give a polynomial-time algorithm.

\begin{theorem} (exact algorithm for SNTV: $\Delta=2$)
\label{thm:SNTV2}
If $\Delta=2$, there exists a polynomial-time algorithm for the constrained SNTV multiwinner voting problem.
\end{theorem}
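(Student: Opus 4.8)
### Proof Proposal for Theorem~\ref{thm:SNTV2}

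The plan is to exploit the special structure of the SNTV rule when $\Delta = 2$. Recall that under SNTV, the score of a committee $S$ is $\score(S) = \sum_{i \in [n]} \mathbf{1}[\text{top choice of voter } i \text{ is in } S]$; equivalently, assign to each candidate $c$ a weight $w(c)$ equal to the number of voters whose favorite candidate is $c$, and then $\score(S) = \sum_{c \in S} w(c)$. So the problem reduces to: \emph{select $k$ candidates of maximum total weight subject to $\ell_j \le |S \cap P_j| \le u_j$ for all $j \in [p]$, where each candidate lies in at most $2$ groups.} This is a linear objective, so the difficulty is purely combinatorial — it is a constrained selection problem, not a submodular one.

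The key observation is that when $\Delta = 2$, the group-membership structure forms a graph: create a vertex for each group $P_j$, plus one auxiliary ``null'' vertex $v_0$; each candidate $c$ becomes an edge — between its two groups if $|T_c| = 2$, between its one group and $v_0$ if $|T_c| = 1$, and a self-loop at $v_0$ if $c$ is in no group. Then selecting a committee is the same as selecting a sub(multi)graph with exactly $k$ edges, and the fairness constraint at $P_j$ becomes a lower and upper bound $\ell_j \le \deg_{\text{selected}}(P_j) \le u_j$ on the degree of vertex $v_j$ in the selected edge set (with $v_0$ unconstrained). Maximizing a linear edge weight subject to exact edge count $k$ and per-vertex degree interval constraints is precisely a \emph{degree-constrained subgraph} problem, which is solvable in polynomial time via a reduction to minimum-cost flow / $b$-matching: this is the classical Tutte–Gallai type transformation. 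First I would set up the flow network with vertex-splitting gadgets that encode each interval $[\ell_j, u_j]$ as lower and upper capacities, enforce the global count $k$ with a source/sink arc of capacity exactly $k$, and put the negated candidate weights as costs; then an integral min-cost flow of value $k$ (which exists and is integral because the network has integral capacities) yields the optimal committee. I would handle the $\ell_j$ lower bounds by the standard trick of converting to a flow-with-lower-bounds instance (saturate the mandatory flow, then find a max/min-cost circulation in the residual network).

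The main obstacle I expect is bookkeeping around the global cardinality constraint $|S| = k$ interacting with the per-group lower bounds $\ell_j$: a degree-constrained subgraph problem natively handles per-vertex degree intervals, but pinning the \emph{total} number of selected edges to exactly $k$ requires care — one natural fix is to add a single global ``budget'' node connected to every edge-gadget, or to iterate over how many ``free'' (null-group) candidates are chosen and reduce to a pure $b$-matching for each value; since there are only $O(m)$ choices this stays polynomial. A secondary point is verifying feasibility is not an issue here: the theorem implicitly concerns the optimization problem, and if the flow network admits no feasible flow of value $k$ respecting all lower bounds, we correctly report infeasibility — consistent with the fact that, unlike $\Delta \ge 3$, the $\Delta = 2$ feasibility question is itself in $P$ via the same flow formulation. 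Finally I would remark that the weights $w(c)$ and hence the costs can be computed in $O(n + m\log m)$ time by a single pass over the voters' top choices, so the overall running time is dominated by the min-cost flow computation and is polynomial in $m$, $n$, and $p$.
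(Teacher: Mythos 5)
Your high-level route is the same as the paper's: turn the $\Delta=2$ structure into a graph whose vertices are the groups and whose edges are the candidates (weights $w(c)$ = number of voters whose top choice is $c$), so the problem becomes a maximum-weight degree-constrained edge-selection problem, solved by $b$-matching machinery. However, there is a genuine gap exactly at the step you yourself flag as the "main obstacle": enforcing the global cardinality $|S|=k$ simultaneously with the per-vertex interval constraints $[\ell_j,u_j]$. Neither of your proposed fixes closes it. Iterating over the number of "free" (null-group) candidates does not pin the total edge count, because candidates belonging to two groups never touch the null vertex, so fixing the null vertex's degree leaves the number of group--group edges unconstrained. And the detailed recipe you describe (a flow network with a source/sink arc of capacity $k$, lower bounds handled by a circulation) is the bipartite recipe: the group graph here is a general graph, and degree-constrained subgraph/simple $b$-matching on non-bipartite graphs is not a min-cost flow problem (its LP needs odd-set constraints), so you must go through the general-matching route (the Tutte-type gadget you name but do not construct).

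For comparison, the paper resolves precisely this point with an explicit gadget that makes the $b$-matching \emph{perfect}, so the cardinality comes out of a counting argument rather than an extra budget constraint. Pad every candidate to exactly two groups (give a one-group or zero-group candidate dummy singleton groups with bounds $[0,1]$); create a vertex $v_i$ per group with $b(v_i)=u_i$ and an edge per candidate with weight $w_a$; attach to each $v_i$ exactly $u_i-\ell_i$ private slack vertices with $b=1$; and add a block of $2k-\sum_i \ell_i$ vertices with $b=1$, joined to all slack vertices, all auxiliary edges having weight $0$. A committee $S$ of size $k$ uses $2k$ units of degree at group vertices, so $v_i$ absorbs $u_i-|S\cap P_i|\leq u_i-\ell_i$ slack edges (giving the lower bound) and at most $u_i$ candidate edges (giving the upper bound), while the extra block forces exactly $2k-\sum_i\ell_i$ slack vertices to be matched away, which pins the number of candidate edges to exactly $k$; maximum-weight perfect $b$-matching is polynomial-time, and its weight equals $\score(S)$. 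If you replace your budget-node sketch with this (or an equivalent worked-out) gadget and drop the pure-flow claim, your argument becomes complete and coincides with the paper's proof; your closing remark that the same formulation also decides feasibility for $\Delta=2$ matches the paper's corollary.
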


\noindent
\begin{proof}
We reduce to the weighted perfect $b$-matching problem, which has a polynomial-time algorithm \cite{schrijver2002combinatorial}. For each candidate $a$, let $w_a$ denote the total number of voters most preferred to it.
W.l.o.g., we assume that each candidate has exactly two attributes. Otherwise, if a candidate $c_i$ has only one attribute, we construct a new group $\left\{c_i\right\}$ and set its constrained lower bound to be 0 and its upper bound to be 1. Similarly, we construct two groups for those candidates without types.
In the following, we construct a graph $G=(V^1\cup V^2\cup V^3,E)$ and a weighted perfect $b$-matching instance.
\begin{enumerate}
\item For each group $P_i$, we construct a corresponding vertex $v_i$. Let $V^1=\left\{v_1,\ldots,v_p\right\}$ be the collection of these vertices. For each candidate $a$, suppose it belongs to $P_i$ and $P_j$. We construct a corresponding edge $e_a=(v_i,v_j)$ on $G$.
\item For each vertex $v_i\in V^1$, construct $u_i-\ell_i$ vertices adjacent to $v_i$. Let $V^2$ be the collection of all these vertices.
\item Construct a collection $V^3$ of $2k-\sum_{i\in [p]}\ell_i$ vertices. For each pair $u\in V^2,v\in V^3$, construct an edge.
\item For each vertex $v_i\in V^1$, set $b(v_i)=u_i$. For each vertice $v\in V^2\cup V^3$, set $b(v)=1$.
\item For each edge $e_a\in V^1\times V^1$, set its weight $w(e_a)=w_a$. Set the weights of other edges to be 0.
\end{enumerate}
\noindent
We claim that any committee $S\in \calB$ corresponds to a perfect $b$-matching of weight equal to $\score(S)$. Given a committee $S$, we do the following steps.
\begin{enumerate}
\item Pick all edges $e_a$ if $a\in S$.
\item For each vertex $v_i\in V^1$, arbitrarily pick $u_i-|S\cap P_i|$ edges in $v_i \times V^2$. Then $v_i$ is covered by exactly $b(v_i)=u_i$ times. Moreover, these edges cover $\sum_{i\in [p]} u_i-|S\cap P_i|=\sum_{i\in [p]} u_i-2k$ vertices in $V^2$ in total.
\item There are $2k-\sum_{i\in [p]} \ell_i$ uncovered vertices in $V^2$. Let $V'$ denote the collection of these vertices. We pick $2k-\sum_{i\in [p]} \ell_i$ edges to form a perfect matching among $V'\cup V^3$.
\end{enumerate}
\noindent
Therefore, each vertex in $V^2\cup V^3$ is covered once which implies the above construction is a perfect $b$-matching. Moreover, the total weight is exactly equal to $\score(S)$.

On the other hand, we show that any perfect $b$-matching $M$ of weight $w(M)$ corresponds to a committee of size $k$ with score $w(M)$. We claim that $M$ must contain $k$ edges in $V^1\cup V^1$. If this claim is true, consider the following committee $S=\left\{a:e_a\in M\right\}$ of score $w(M)$. Since there are only $u_i-\ell_i$ edges adjacent to $v_i$ from $V^2$ $(i\in [p])$, $M$ must contain $\ell_i\leq t\leq u_i$ edges in $v_i \times V^1$. Therefore, we have $\ell_i\leq |S\cap P_i|\leq u_i$ for all $i\in [p]$, which completes the proof that $S$ is a feasible committee.

It remains to prove the claim. Observe that $M$ must contain $2k-\sum_{i\in [p]} \ell_i$ edges in $V^2\times V^3$ since each vertex in $V^3$ should be covered once. Moreover, due to the fact that $|V^2|=\sum_{i\in [p]} u_i-\ell_i$, $M$ must contain $\sum_{i\in [p]} u_i-2k$ edges in $V^1\times V^2$ for covering each vertex in $V^2$ once. These edges cover vertices in $V^1$ by $\sum_{i\in [p]} u_i-2k$ total times. Since $M$ should cover vertices in $V^1$ by $\sum_{i\in [p]} u_i$ times, there must be $k$ edges in $V^1\times V^1$. It completes the proof.
\end{proof}

\noindent
From the above proof, we have the following corollary.

\begin{corollary}
The constrained multiwinner voting feasibility problem can be solved in polynomial time when $\Delta=2$.
\end{corollary}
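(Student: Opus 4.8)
The plan is to observe that the feasibility question is exactly the question of whether the family $\calB$ of size-$k$ committees satisfying all fairness constraints is non-empty, and that this is already decided — once the weights are discarded — by the reduction in the proof of Theorem~\ref{thm:SNTV2}. First I would recall that reduction: given the groups $P_1,\dots,P_p$ (with each candidate in at most two of them, after the padding that adds a dummy singleton or dummy pair of groups for each candidate lying in fewer than two real groups), the bounds $\ell_i,u_i$, and the target $k$, one builds the graph $G=(V^1\cup V^2\cup V^3,E)$ together with the degree function $b$ exactly as in that proof. The construction has $p$ vertices in $V^1$, at most $\sum_{i}(u_i-\ell_i)\le pm$ vertices in $V^2$, and at most $2k\le 2m$ vertices in $V^3$, so $G$ has polynomially many vertices and edges and can be written down in polynomial time.

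The key point is that the correspondence established in the proof of Theorem~\ref{thm:SNTV2} between committees $S\in\calB$ and perfect $b$-matchings of $G$ is purely combinatorial: the SNTV weights $w_a$ on the edges inside $V^1\times V^1$ enter only into the objective value (the total weight of a matching equals $\score(S)$), never into which subgraphs qualify as perfect $b$-matchings. Consequently $\calB\neq\emptyset$ if and only if $G$ admits a perfect $b$-matching. I would therefore invoke the same polynomial-time algorithm for (weighted) perfect $b$-matching \cite{schrijver2002combinatorial} with all weights set to $0$ and report ``feasible'' exactly when it returns a perfect $b$-matching $M$; if an explicit witness is wanted, one decodes $S=\{a : e_a\in M\}$ by the same argument as in Theorem~\ref{thm:SNTV2}.

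There is essentially no new obstacle: the only things to re-examine are bookkeeping items already handled in the proof of Theorem~\ref{thm:SNTV2} — that padding candidates up to exactly two groups does not change feasibility, and that the counting identities forcing any perfect $b$-matching to use exactly $k$ edges inside $V^1\times V^1$ (hence to select exactly $k$ candidates, with $\ell_i\le|S\cap P_i|\le u_i$ for every $i$) go through verbatim. Since none of these steps reference the score function, the corollary follows immediately.
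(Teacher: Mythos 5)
Your proposal is correct and is essentially the paper's own argument: the corollary is stated as an immediate consequence of the reduction in Theorem~\ref{thm:SNTV2}, since the correspondence between feasible committees and perfect $b$-matchings does not depend on the edge weights, so feasibility reduces to deciding existence of a perfect $b$-matching in polynomial time.
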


\section{A Nearly Linear Algorithm for Constrained SNTV Multiwinner Voting when $\Delta=1$}
\label{app:SNTVdelta=1}

In this section, we consider a simple case of $\Delta=1$ using the SNTV rule, and present a nearly linear exact algorithm.

\begin{theorem}
\label{thm:SNTVdelta=1}
There exists an $O(n+m\log m)$ time algorithm for the constrained SNTV multiwinner voting problem when $\Delta=1$.
\end{theorem}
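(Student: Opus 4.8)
The plan is to exploit the very simple structure of the SNTV rule together with $\Delta=1$: each voter contributes a score of $1$ to the committee exactly when her top-ranked candidate is selected, and the groups $P_1,\dots,P_p$ partition (a subset of) $C$. First I would preprocess the votes to compute, for each candidate $c\in C$, the weight $w_c$ equal to the number of voters whose most preferred candidate is $c$; scanning the $n$ voters' top choices does this in $O(n)$ time, after which the preference lists are no longer needed. The total score of a committee $S$ is then simply $\sum_{c\in S} w_c$, a linear (modular) function, so the problem reduces to: choose $k$ candidates maximizing total weight subject to $\ell_j \le |S\cap P_j|\le u_j$ for every group $j$, plus possibly a ``no group'' bucket $P_0$ with $\ell_0=0$, $u_0=|P_0|$.

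Next, because the groups are disjoint, the choices in distinct groups are independent once we fix how many candidates are taken from each group. So the algorithm is: within each group $P_j$, sort its candidates by $w_c$ in decreasing order (total cost $O(\sum_j |P_j|\log|P_j|) = O(m\log m)$); then for a target count $t$ with $\ell_j\le t\le u_j$, the best choice is the top $t$ candidates of $P_j$, giving a prefix-sum value $g_j(t)$. We must pick counts $t_j$ with $\sum_j t_j = k$ and $\ell_j\le t_j\le u_j$ maximizing $\sum_j g_j(t_j)$. The key observation is that each $g_j$ is \emph{concave} in $t$ (marginal gains $w_c$ are non-increasing along the sorted order), so this is a separable concave resource-allocation problem and can be solved greedily: start with $t_j=\ell_j$ for all $j$ (feasibility of the instance guarantees $\sum_j\ell_j\le k\le\sum_j u_j$), then repeatedly add one unit to whichever group has the largest current marginal gain $g_j(t_j{+}1)-g_j(t_j)$, until $\sum_j t_j=k$. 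Using a priority queue keyed by marginal gains this is $O((k+p)\log p)$; since each ``marginal gain'' is just the next element in an already-sorted list, and $k\le m$, $p\le m$, this is dominated by the $O(m\log m)$ sorting, and one can in fact do the whole selection by globally sorting all ``eligible'' marginal increments, again $O(m\log m)$.

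For correctness I would argue: (i) feasibility — the constructed $S$ has exactly $k$ candidates and $\ell_j\le t_j\le u_j$ by construction, so it lies in $\calB$; (ii) optimality within fixed counts — given the $t_j$, taking the $t_j$ heaviest candidates of each $P_j$ is optimal since the objective is additive over groups and monotone in each $w_c$; (iii) optimality of the count vector — by the standard exchange argument for maximizing a separable concave function under a sum constraint with box constraints, the greedy ``augment the best marginal'' rule is optimal (any deviation can be locally improved by moving a unit toward the greedy choice without violating box constraints, using concavity). Combining (i)–(iii) the output is an optimal feasible committee. The running time is $O(n)$ for the vote aggregation plus $O(m\log m)$ for sorting within groups and performing the greedy augmentation, i.e.\ $O(n+m\log m)$ overall.

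The main obstacle I anticipate is purely bookkeeping rather than conceptual: handling the boundary cases cleanly — candidates that belong to no group (a free bucket with only an upper bound), groups that are infeasible to fully satisfy unless one is careful that $\sum_j \ell_j\le k$, and making sure the greedy augmentation never needs to \emph{remove} a forced lower-bound pick — and verifying that all of this still fits in linear-plus-sorting time. I would also want to double-check the concavity-based optimality argument in the presence of the upper bounds $u_j$ (a group can ``run out'' of allowed slots), which is handled by simply deleting a group from the priority queue once $t_j=u_j$; the exchange argument goes through because the box constraints are intervals.
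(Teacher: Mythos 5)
Your proposal is correct and is essentially the paper's algorithm: compute candidate weights in $O(n)$, force the $\ell_j$ heaviest candidates of each group into $S$, then greedily fill the remaining $k-\sum_j \ell_j$ slots by weight subject to the upper bounds $u_j$, all in $O(n+m\log m)$. Your separable-concave resource-allocation framing (prefix sums $g_j$, marginal gains, exchange argument) is just a more formal packaging of the same global greedy, and it usefully spells out the optimality argument that the paper leaves implicit via ``linearity of the objective.''
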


\begin{proof}
Note that the SNTV rule has a linear total score function.
For each candidate $c_i$, we first compute its weight $w_i$ which is the number of voters most preferred to $c_i$.
This step costs $O(n)$ time.
Among each group $P_i$, we select $\ell_i$ candidates, whose weights $w_i$ are largest, into the committee $S$ in $O(m)$ time.
Finally, we sort the remaining candidates in nondecreasing order by their weights $w_i$ (breaking ties arbitrarily) with running time $O(m\log m)$.
Consider candidates in order and each time put a candidate into $S$ if keeping feasibility, i.e., $|S\cap P_i|\leq u_i$ always holds for all $i\in [p]$. Since $\Delta=1$, it costs $O(1)$ time to check the feasibility for each candidate.
Therefore, the total time is $O(n+m\log m)$.
The optimality follows from the linear objective function, which completes the proof.
\end{proof}

\end{document}